\newtheorem{theorem}{Theorem}[section]
\newtheorem{lemma}[theorem]{Lemma}
\newtheorem{claim}[theorem]{Claim}
\newtheorem{corollary}[theorem]{Corollary}
\newtheorem{definition}{Definition}[section]
\newcommand{\AutoAdjust}[3]{\mathchoice{ \left #1 #2  \right #3}{#1 #2 #3}{#1 #2 #3}{#1 #2 #3} }
\newcommand{\Xcomment}[1]{{}}
\newcommand{\InParentheses}[1]{\AutoAdjust{(}{#1}{)}}
\newcommand{\InBrackets}[1]{\AutoAdjust{[}{#1}{]}}
\newcommand{\Ex}[2][]{\operatorname{\mathbf E}_{#1}\InBrackets{#2}}
\newcommand{\Exlong}[2][]{\operatornamewithlimits{\mathbf E}\limits_{#1}\InBrackets{#2}}
\newcommand{\Prx}[2][]{\operatorname{\mathbf{Pr}}_{#1}\InBrackets{#2}}
\def\prob{\Prx}
\def\expect{\Ex}
\newcommand{\vect}[1]{\ensuremath{\mathbf{#1}}}
\newcommand{\R}{\mathbb{R}}
\newcommand{\auction}{\mathcal{A}}
\newcommand{\nbidder}{n}
\newcommand{\val}{v}
\newcommand{\vali}[1][i]{\val_{#1}}
\newcommand{\vals}{\vect{\val}}
\newcommand{\hval}{\widetilde{v}}
\newcommand{\hvali}[1][i]{\hval_{#1}}
\newcommand{\hvals}{\widetilde{\vect{\val}}}
\newcommand{\dist}{\mathcal{D}}
\newcommand{\F}{{\mathcal{F}}}
\newcommand{\FT}{{\F^{(2)}}}
\newcommand{\FL}{{\F^{(2,\ell)}}}
\newcommand{\EF}{{\text{EFO}}}
\newcommand{\EFL}{{\text{EFO}^{(2)}}}
\newcommand{\weight}{w}
\newcommand{\bn}{\mathbf{B}}
\newcommand{\bid}{b}
\newcommand{\bidi}[1][i]{{\bid_{#1}}}
\newcommand{\bids}{\vect{\bid}}
\newcommand{\bidsmi}[1][i]{\bids_{\text{-}#1}}
\newcommand\maxV{\mbox{{\sc maxV}}}
\newcommand{\Moff}[1][]{\mathcal{M}_{\text{off}}^{#1}}
\newcommand{\Mon}[1][]{\mathcal{M}_{\text{on}}^{#1}}
\begin{document}

\title{Competitive analysis via benchmark decomposition\\[.2in]}

\author{Ning Chen\thanks{Nanyang Technological University, Singapore. Email: \texttt{ningc@ntu.edu.sg}}
\and
Nick Gravin\thanks{Microsoft Research. Email: \texttt{ngravin@microsoft.com}}
\and
Pinyan Lu\thanks{Microsoft Research. Email: \texttt{pinyanl@microsoft.com}}
}

\date{}

\maketitle

\begin{abstract}


We propose a uniform approach for the design and analysis of prior-free competitive auctions and online auctions. Our philosophy is to view the benchmark function as a variable parameter of the model and study a broad class of functions instead of a individual target benchmark. We consider a multitude of well-studied auction settings, and improve upon a few previous results. 
\begin{itemize}
\item Multi-unit auctions. Given a $\beta$-competitive unlimited supply auction, the best previously known multi-unit auction is $2\beta$-competitive.
We design a $(1+\beta)$-competitive auction reducing the ratio from $4.84$ to $3.24$. These results carry over to matroid and position auctions.

\item General downward-closed environments. We design a $6.5$-competitive auction improving upon the ratio of $7.5$.
Our auction is noticeably simpler than the previous best one.

\item Unlimited supply online auctions. Our analysis yields an auction with a competitive ratio of $4.12$, which significantly
narrows the margin of $[4, 4.84]$ previously known for this problem.
\end{itemize}

A particularly important tool in our analysis is a simple decomposition lemma, which allows us to bound the competitive ratio against a sum of
benchmark functions. We use this lemma in a ``divide and conquer'' fashion by dividing the target benchmark into the sum of simpler functions.

\end{abstract}

\setcounter{page}{0}\thispagestyle{empty}
\newpage

\section{Introduction}


%

A common theme in the design and analysis of online algorithms and prior-free auctions is the competitive framework. In both cases, an online algorithm, which
has to make irrevocable decisions online, or a truthful mechanism, whose outcomes must be aligned with bidders' incentives, are competing against a benchmark
corresponding to a desirable outcome. In each specific problem there usually is a unique well-motivated benchmark (rarely a few) with a small gap to its
handicapped competitors. In this sense, a benchmark is treated as an alternative mechanism or an algorithm that does not need to take into account incentives, or
the online nature of a problem. On the other hand, a benchmark is just a mapping form the space of all possible inputs to real values representing the desired level of efficiency. From this point of view, one may fix any reasonable function as a benchmark and see how well a design can fit it. We may even ask this question for a {\em class of benchmark functions}.

This approach may seem more demanding, however can be helpful in the original problem. In particular, some classes of benchmark functions naturally admit well suited designs. Even if these functions are not close enough to the benchmark of interest, we still may use them as building blocks. In particular, we might be able to decompose the original inconvenient benchmark into the sum of simpler ones, each with a good performance guarantee. On the other note, it is often the case that an online or auction problem admits a reduction to a simpler one. It can be only easier to streamline the problem if we have a better understanding of the simpler setting for a variety of benchmarks. In this paper we demonstrate how these ideas apply to a multitude of auction problems including certain online instances.

%
%
%
%
%
%
%
%

We will be looking at single-parameter environments, where an auctioneer sells an abstract service to $\nbidder$ potential customers bidding in the auction. The auctioneer has a feasibility constraint on which sets of agents can be served simultaneously. Each bidder $i$ values the service at a single privately known value $\vali$. We consider a single-round auction, where each bidder submits a sealed bid $\bidi$. After soliciting the bids auctioneer decides on whether each bidder $i$ receives the service and the amount that $i$ pays. If bidder $i$ gets served, his {\em utility} is the difference between his value $\vali$ and his payment; otherwise, the bidder pays $0$ and his utility is $0$. The auctioneer's {\em revenue} is the total payment of the bidders. We consider the following well-studied single-parameter settings.

\begin{itemize}
  \item Digital goods. The auctioneer sells unlimited number of copies of a single item (e.g., digital goods). Thus, any set of winners is feasible.
  \item Limited supply (a.k.a. multi-unit) auctions. There are $\ell$ copies of the item in total, and thus, at most $\ell$ bidders can be served.
	      There is a ratio-preserving reduction \cite{HartlineY11} to limited supply auctions from position and matroid environments, which have
				applications in FCC spectrum auction and sponsored search. All our results for multi-unit environment carry over to these settings.
  \item General downward-closed permutation environments. Set system of feasible sets is downward-closed if any subset of a feasible set is also feasible.
	      The auctioneer has a probabilistic feasibility constraint, i.e., given by a probability distribution
				over feasible sets. Bidders are assumed to be symmetric, i.e., bidders' values are revealed in a random order. This is a generalization of
				multi-unit and position auctions, as well as matching environments, where each feasible set represents vertices on the one side of a bipartite matching.
  \item Online auctions. The bidders arrive online one by one and the auctioneer makes an irrevocable decision (whether the bidder receives a service or not
	      and at what price) immediately as each new bidder arrives. Online auctions capture important scenarios happening continuously in the Internet, where
				bidders may appear at any time and want to receive service right away.
\end{itemize}


The scope of our work is {\em prior-free auction design} as opposed to the more traditional in economics, Bayesian framework, where buyers' preferences are assumed to come from a given distribution and performance is measured in expectation over the prior distribution. In prior-free settings we look for auctions that perform well (with respect to a benchmark) without any prior knowledge of the bidders' preferences. These auctions are robust to changes of the prior distribution and often give a good approximation to the auctions that are specifically tailored to a distribution.


\subsection{Benchmarks and Competitive Analysis}

We assume that all bidders are selfish and aim to maximize their own utility. We say that an auction is {\em truthful} or incentive compatible if it is a dominant strategy for every bidder $i$ to bid his private value, i.e., $\bidi=\vali$, no matter how other bidders behave. A randomized auction is (universally) truthful if it is given by a distribution over deterministic truthful auctions.

The objective is to design auctions that maximize revenue of the auctioneer. To evaluate the performance of an auction, we need to define a benchmark function $f: \R^\nbidder\rightarrow\R$,
where $f(\bids)$ measures our target revenue for the bid vector $\bids\in \R^\nbidder$.
Given a benchmark function $f(\cdot)$, we say that an auction $\auction$ has a {\em competitive ratio} of $\lambda$ with respect to $f(\cdot)$ if

\[
\Ex{\auction(\bids)}\ge \frac{f(\bids)}{\lambda}, \quad \forall \bids=(\bidi[1],\ldots,\bidi[\nbidder])\in \R^\nbidder
\]

where $\Ex{\auction(\bids)}$ is the expected revenue of auction $\auction$ on the bid vector $\bids$.

A benchmark function should be, on the one hand, economically meaningful in providing a revenue target, and on the other hand not too ambitious so that a truthful auction may have small competitive ratio against the benchmark. For the auction with unlimited supply, the most well-studied benchmark is $\FT(\bids)= \max_{2\le k\le\nbidder} \ k\cdot \bidi[(k)]$, where bids are reordered so that $\bidi[(1)]\ge \bidi[(2)]\ge \cdots \ge \bidi[(\nbidder)]$. That is, $\FT$ gives the largest possible revenue of a fixed price sale given that there are at least two buyers. The reason for having at least two winners is that otherwise, all but one bidders may have $0$ value and then no truthful auction can be competitive against a single bidder with arbitrary large private value. Another meaningful benchmark is $\maxV(\bids)= \max_{1\le k<\nbidder} \ k\cdot \bidi[(k+1)]$. We note that $k\cdot \bidi[(k+1)]$ is the revenue of the Vickrey auction selling $k$ units. Hence, $\maxV$ is the maximum revenue of a $k$-unit Vickrey auction for all possible values of the supply $k$.

For the $\ell$-unit auctions, one can naturally extend the definition of $\F^{(2)}$ to
\[\FL(\bids)=\max_{2\le k\le\ell} k \cdot \bidi[(k)].\]
This is the largest possible revenue of a fixed price sale given that there are at least two and at most $\ell$ buyers (as there are only $\ell$ copies).

Hartline and Yan~\cite{HartlineY11} gave another interpretation of $\FT$ for the unlimited supply setting. Namely, it is the optimal revenue one can extract  in an envy-free allocation with at least two winners. The definition was extended by \cite{HartlineY11} to more general environments such as limited supply and general downward-closed environments. We denote by $\EFL$ the largest revenue that can be obtained in an envy-free allocation for a slightly modified bid vector $\bids^2=(\bidi[2],\bidi[2],\dots,\bidi[n])$.
Interestingly, although $\EFL$ coincides with $\FT$ in the digital goods setting, $\EFL$ is not the same as $\FL$ for the limited supply case, where the precise formula for $\EFL$ will be given in Section~\ref{sec:limited}.

For the online setting, we focus on the model of Koutsoupias and Pierrakos~\cite{KoutsoupiasP13} of unlimited supply auctions competing against the benchmark $\FT$ and $\maxV$. We assume random arrival order of the bidders, as if bidders arrive in an adversarial order, competitive ratio cannot be a constant~\cite{KoutsoupiasP13}.

%
%
%
%
%
%
%
%
%

\subsection{Results and Techniques}

Our recent work~\cite{CGL14} on digital goods auction proposed a uniform procedure for calculating the {\em optimal} competitive ratio against any monotone benchmark. In particular, it yielded tight competitive ratios against the $\maxV$ and $\FT$ benchmarks. Here, we study the design of competitive auctions in other settings. We summarize previous and our new results in the following table. (Since some of the earlier work used auctions for digital goods as a black-box, we update their bounds accordingly with the new tight bounds of~\cite{CGL14}.)


\vspace{.1in}
\begin{tabular}{|c|c|c|c|c|}
  \hline
   & Limited supply & Downward-closed & Online & Online \\
      & $\EFL$ & $\EFL$ & $\FT$ & $\maxV$ \\
     \hline
  Previous upper bounds  & 4.84~\cite{HartlineY11} & 7.5~\cite{DevanurHH13} & 4.84~\cite{KoutsoupiasP13} & - \\
    \hline
  Our upper bounds & 3.24 & 6.5 & 4.12 & 2.42 \\
    \hline
  Lower bounds & 2.42~\cite{GoldbergHKS04} & 2.42~\cite{GoldbergHKS04} & 4~\cite{KoutsoupiasP13} & 2 \\
  \hline
\end{tabular}

\vspace{.1in}

All bounds in the table are for the worst-case scenarios when the number of bidders $\nbidder$ can be arbitrarily large. Better bounds are know for every fixed number of bidders, although, as $\nbidder$ grows, these bounds quickly converge to the worst-case bounds given in the table.

An important new perspective of~\cite{CGL14} was to view benchmark function as a variable parameter of the model. It should be noted that to a limited degree an earlier work \cite{HartlineM05} also studied a broad class of benchmarks in the digital goods setting.
In this paper, we continue to explore this idea in more general settings. Unlike~\cite{CGL14} we {\em explicitly design} auctions with improved competitive ratios.

The following general yet simple observation appears to be very helpful in our analysis. When we seek for an auction with good performance against a specific benchmark $f(\bids)$, it is often the case that $f(\bids)$ can be decomposed into the sum of two functions $f(\bids)=f_1(\bids)+f_2(\bids)$, such that it is easier to find good competitive auctions separately against $f_1(\cdot)$ and $f_2(\cdot)$. The following lemma gives an upper bound on the competitive ratio against the original benchmark function $f(\cdot)$.

\begin{lemma}[Decomposition lemma]\label{le:benchmark_sum}
Let $\auction_1$ and $\auction_2$ be truthful $\lambda_1$ and $\lambda_2$ competitive auctions against the benchmarks $f_1(\cdot)$ and $f_2(\cdot)$, respectively. Then there is a truthful $\lambda_1+\lambda_2$ competitive auction against the benchmark $f_1(\cdot)+f_2(\cdot)$.
\end{lemma}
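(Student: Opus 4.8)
The plan is to obtain the combined auction by \emph{randomly mixing} $\auction_1$ and $\auction_2$ with a fixed, bid-independent probability. Concretely, I would define $\auction$ to run $\auction_1$ with probability $p_1$ and $\auction_2$ with probability $p_2 = 1 - p_1$, where the coin flip is performed before the bids are consulted and does not depend on $\bids$. The first thing to check is truthfulness: since $\auction_1$ and $\auction_2$ are each truthful, and the mixing probability is a constant independent of the reported bids, for every realization of the coin the bidder faces a truthful deterministic auction, so bidding $\bidi = \vali$ remains a dominant strategy in each branch. Hence $\auction$ is universally truthful, matching the notion of randomized truthfulness used in the excerpt.

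Next I would bound the expected revenue by linearity of expectation and then invoke the two competitive guarantees. Writing the expected revenue of the mixture and substituting $\Ex{\auction_1(\bids)} \ge f_1(\bids)/\lambda_1$ and $\Ex{\auction_2(\bids)} \ge f_2(\bids)/\lambda_2$ gives
\[
\Ex{\auction(\bids)} \;=\; p_1\,\Ex{\auction_1(\bids)} + p_2\,\Ex{\auction_2(\bids)} \;\ge\; \frac{p_1}{\lambda_1}\,f_1(\bids) + \frac{p_2}{\lambda_2}\,f_2(\bids).
\]
The key step is choosing $p_1$ and $p_2$ so that the two coefficients coincide with the target ratio $1/(\lambda_1+\lambda_2)$. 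Setting $p_1 = \lambda_1/(\lambda_1+\lambda_2)$ and $p_2 = \lambda_2/(\lambda_1+\lambda_2)$ makes $p_1/\lambda_1 = p_2/\lambda_2 = 1/(\lambda_1+\lambda_2)$, and the right-hand side collapses to $\bigl(f_1(\bids)+f_2(\bids)\bigr)/(\lambda_1+\lambda_2)$, which is exactly the claimed $(\lambda_1+\lambda_2)$-competitiveness against $f_1(\cdot)+f_2(\cdot)$. Since this holds for every $\bids \in \R^\nbidder$, the bound is uniform as required by the definition of competitive ratio.

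This argument is short and there is no serious obstacle; the only things to be careful about are (i) that the mixing weights are forced once we insist on weighting $f_1$ and $f_2$ equally at rate $1/(\lambda_1+\lambda_2)$, and (ii) that truthfulness survives the randomization, which it does precisely because the probabilities $p_1, p_2$ are constants rather than functions of the bids. I would also note that the construction extends immediately by induction to a decomposition $f = \sum_j f_j$ into finitely many benchmarks, mixing $\auction_j$ with probability proportional to $\lambda_j$, which is the form in which the lemma will be applied in the later ``divide and conquer'' arguments.
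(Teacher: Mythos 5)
Your proposal is correct and follows essentially the same approach as the paper: mixing $\auction_1$ and $\auction_2$ with probabilities $\lambda_1/(\lambda_1+\lambda_2)$ and $\lambda_2/(\lambda_1+\lambda_2)$, then applying linearity of expectation together with the two competitive guarantees. The truthfulness observation and the inductive extension to finitely many benchmarks are sound additions but do not change the argument.
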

\begin{proof}
We construct an auction that runs $\auction_1$ with probability $\frac{\lambda_1}{\lambda_1+\lambda_2}$ and runs $\auction_2$ with probability $\frac{\lambda_2}{\lambda_1+\lambda_2}$. The constructed auction is (universally) truthful by definition. Its performance for any bid vector $\bids$ is at least
\[
\frac{\lambda_1}{\lambda_1+\lambda_2}\auction_1(\bids)+\frac{\lambda_2}{\lambda_1+\lambda_2}\auction_2(\bids)\ge
\frac{\lambda_1}{\lambda_1+\lambda_2}\frac{f_1(\bids)}{\lambda_1}+\frac{\lambda_2}{\lambda_1+\lambda_2}\frac{f_2(\bids)}{\lambda_2}=
\frac{f_1(\bids)+f_2(\bids)}{\lambda_1+\lambda_2}.
\]
\end{proof}



\subsection{Related Work}

The worst-case study of digital goods auctions was initiated by Goldberg et al.~\cite{GoldbergHW01}. The competitive prior-free framework was formulated by Fiat et al.~\cite{FiatGHK02}. Over the past decade a lot of effort has been devoted to improving the analysis and competitive ratios of digital goods auctions, see, e.g., \cite{GoldbergH03}, \cite{GoldbergHKS04}, \cite{HartlineM05}, \cite{FeigeFHK05}, \cite{AlaeiMS09} and \cite{IchibaI10}. In our recent work~\cite{CGL14}, we showed the optimal bound on the competitive ratio for digital goods auctions.

A few other closely related settings have stemmed from the study of digital goods auctions with the most immediate extension being the limited supply auctions also known as multi-unit auctions. Multi-unit environments have been traditionally studied with respect to the $\FL$ benchmark, which allows a straightforward reduction \cite{GoldbergHKSW06} to the unlimited supply case with a specific number of bidders. Thus optimal bounds of \cite{CGL14} carry over to the multi-unit auctions with respect to the $\FL$ benchmark.


The general downward-closed single-parameter environments include, e.g,. matching, matroid, and position auctions have also received considerable attention in recent years. Hartline and Yan~\cite{HartlineY11} characterized the optimal revenue in the envy-free outcomes and proposed
$\EFL$ as a uniform benchmark for all of these environments.
They presented a truthful multi-unit auction with a constant competitive ratio and established a no-loss reduction from position and matroid auctions to a simpler multi-unit setting. Devanur et al.~\cite{DevanurHY12} improved the competitive ratio to $9.6$ and gave a $189$-competitive auction for the more general downward-closed environments. Ha and Hartline~\cite{HaH12a} further improved the competitive ratio to $30.4$ for the downward-closed environments. In an unpublished followup paper~\cite{HaH12b}, the authors presented a 11-competitive auction using elegant combination of biased sampling and profit extraction ideas. The best known ratio is due to Devanur at al.~\cite{DevanurHH13} (official version of ~\cite{HaH12b}) with 
a $7.5$-competitive auction that builds upon the biased sampling approach in a significantly more intricate manner than in~\cite{HaH12b} .

As a multi-parameter extension of the digital goods auctions setting, Gravin and Lu~\cite{GravinL13} studied competitive auction in the presence of positive externalities among the buyers.

Another thread of work considers digital goods auctions is in the online framework. Motivated by internet advertising, Mahdian and Saberi~\cite{MahdianS06} proposed a model where supply is unknown in advance. Devanur and Hartline~\cite{DevanurH09} studied prior-free auctions in this model and by applying random sampling technique derived results in the prior-free setting.
There was substantial interest from machine learning community~\cite{BlumH05}, \cite{BlumKRW04},\cite{BalcanBHM08} in a closely related online pricing problem. However, this work together with an earlier work~\cite{Bar-YossefHW02} on online auctions does not assume random order of arrivals. It also uses machine learning techniques resulting in a worse performance guarantees that depends on $h$, the ratio between the highest and the lowest bid. Lastly, the setting of Koutsoupias and Pierrakos~\cite{KoutsoupiasP13} is closely related to generalized secretary problem (for a survey on secretary problem and online digital goods auction see~\cite{BabaioffIKK08}). They gave a black-box reduction of the online problem to the standard off-line digital goods setting with a factor 2 loss in the competitive ratio.

\section{Limited supply}
\label{sec:limited}
%

It was pointed out in \cite{GoldbergHKSW06} that there is an equivalence between the unlimited supply auction problem for the $\FT$ benchmark and the limited
supply auction problem for the $\FL$ benchmark. Namely, any unlimited supply auction with $\ell$ bidders that is $\beta$-competitive against $\FT$ can be
converted into a $\beta$-competitive $\ell$-unit auction against $\FL$. This equivalence and the tight results of \cite{CGL14} for unlimited supply auctions with $\ell$ bidders against $\FT$ benchmark yield tight results for $\ell$-unit auction against $\FL$
benchmark with the same competitive ratio $\lambda_\ell$.\footnote{$\lambda_\ell = 1-\sum\limits_{i=2}^\ell \frac{i}{i-1} \binom{\ell-1}{i-1}\left(\frac{-1}{\ell}\right)^{i-1},$ which converges to 2.42 when $\ell$ approaches infinity.}

A similar equivalence was established in \cite{CGL14} between $\ell$-unit auctions competing with any benchmark $f(\cdot)$ that depends only on the first $\ell$
highest bids and unlimited supply auctions with $\ell$ bidders. However, beyond such benchmarks \cite{CGL14} does not provide a satisfactory way to compute tight competitive ratios in the limited supply case. For example, economically meaningful benchmark $\EFL$ of Hartline and Yan~\cite{HartlineY11} depends not only on the first $\ell$ highest bids.


%
%

\begin{definition}
For a fixed valuation profile $\vals$, order all valuations by $v_{(1)}\ge v_{(2)} \ge \cdots \ge v_{(n)}$
and let $g(j)=j\cdot\vali[(j)]$ for each $2\le j\le n$. Consider the concave envelope $\widehat{g}(\cdot)$ of the function $g(\cdot)$ on the interval $[2,n]$. Formally, for each $2\le j\le n$
\[
\widehat{g}(j)=\max_{\substack{i,k: \\ 2\le i\le j \le k\le n}} \left[g(i)\cdot\frac{k-j}{k-i}+g(k)\cdot\frac{j-i}{k-i}\right].
\]
For $\ell$-unit auction
$
\EFL(\vals)=\max\limits_{2\le i\le\ell}\widehat{g}(i).
$
\end{definition}

There is only a constant gap between $\EFL$ and $\FL$ benchmarks for $\ell$-unit auctions.

\begin{lemma}\label{le:EFOL}
For any valuation profile $\vals$, \[\FL(\vals)\le \EFL(\vals)\le \FL(\vals)+(\ell-2)\cdot\vali[(\ell+1)].\]
\end{lemma}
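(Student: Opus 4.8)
The claim splits into two inequalities, and I would handle them separately. For the lower bound $\FL(\vals)\le\EFL(\vals)$, I would just invoke the defining property of the concave envelope: evaluating the formula for $\widehat g(j)$ at $i=j$ (with any admissible $k>j$, or $i<k=j=n$ when $j=n$) shows $\widehat g(j)\ge g(j)$ for every $j$. Hence for each $k$ with $2\le k\le\ell$ we get $g(k)\le\widehat g(k)\le\max_{2\le i\le\ell}\widehat g(i)=\EFL(\vals)$, and maximizing over such $k$ gives $\FL(\vals)=\max_{2\le k\le\ell}g(k)\le\EFL(\vals)$.

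For the upper bound it suffices to prove $\widehat g(i)\le\FL(\vals)+(\ell-2)\,\vali[(\ell+1)]$ for each fixed $i$ with $2\le i\le\ell$, since $\EFL$ is the maximum of these values. I would fix such an $i$ and pick $i',k'$ with $2\le i'\le i\le k'\le n$ attaining the maximum in the definition, so that $\widehat g(i)=(1-\theta)\,g(i')+\theta\,g(k')$ where $\theta=\frac{i-i'}{k'-i'}\in[0,1]$. If $k'\le\ell$, then both $g(i')$ and $g(k')$ are at most $\FL(\vals)$, so the convex combination satisfies $\widehat g(i)\le\FL(\vals)$ and we are done. The only substantive case is $k'\ge\ell+1$.

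In that case I would combine three bounds: $g(i')\le\FL(\vals)$ (since $2\le i'\le\ell$); $g(k')=k'\vali[(k')]\le k'\vali[(\ell+1)]$ (since the values are sorted decreasingly and $k'\ge\ell+1$); and $\FL(\vals)\ge\ell\,\vali[(\ell)]\ge\ell\,\vali[(\ell+1)]$ (taking $k=\ell$ in the definition of $\FL$). Substituting the first two gives $\widehat g(i)\le(1-\theta)\FL(\vals)+\theta k'\vali[(\ell+1)]$, so the target inequality reduces to $\bigl(\theta k'-(\ell-2)\bigr)\vali[(\ell+1)]\le\theta\,\FL(\vals)$. Using $\FL(\vals)\ge\ell\,\vali[(\ell+1)]$, it then suffices to verify $\theta(k'-\ell)\le\ell-2$, which follows from $\frac{k'-\ell}{k'-i'}\le1$ (because $i'\le\ell<k'$) together with $i-i'\le\ell-2$ (because $i\le\ell$ and $i'\ge2$), yielding $\theta(k'-\ell)=\frac{(i-i')(k'-\ell)}{k'-i'}\le i-i'\le\ell-2$.

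The main obstacle is the bookkeeping in the case $k'>\ell$: the envelope can be supported by an index beyond $\ell$, so $g(k')$ itself may exceed $\FL(\vals)$, and the argument must exploit both that the weight $\theta$ placed on that far point is small and that $\FL(\vals)$ already carries a term of size $\ell\,\vali[(\ell+1)]$ to absorb the excess. I would also record the degenerate cases, all immediate: $\ell=2$ (the extra term vanishes and indeed $\EFL=\FL$), $\vali[(\ell+1)]=0$, and $\theta k'-(\ell-2)<0$ (where the left-hand side of the reduced inequality is nonpositive while the right-hand side is nonnegative).
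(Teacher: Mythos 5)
Your proof is correct, and its overall skeleton matches the paper's: the lower bound via $\widehat g(j)\ge g(j)$, and the upper bound by fixing the supporting indices of the envelope and splitting on whether the right support point $k'$ lies within the supply $\ell$ or beyond it. The only genuine divergence is how the main case $k'\ge \ell+1$ is closed. The paper expands the convex combination algebraically and uses the monotonicity of sorted values, $\vali[(k^*)]\le \vali[(i^*)]$, to cancel the cross terms, arriving directly at $\EFL(\vals)\le g(i^*)+(j^*-i^*)\cdot\vali[(k^*)]$, after which each factor is bounded trivially. You instead trade $g(k')$ for $k'\cdot\vali[(\ell+1)]$ and then absorb the resulting excess into the slack of $\FL(\vals)\ge \ell\cdot\vali[(\ell+1)]$, reducing everything to the combinatorial inequality $\theta(k'-\ell)\le \ell-2$ with $\theta=\frac{i-i'}{k'-i'}$. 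Both arguments are equally rigorous; the paper's avoids needing the extra bound $\FL(\vals)\ge\ell\cdot\vali[(\ell+1)]$ (it invokes a weaker version of it only afterwards, for a different purpose), while yours makes the bookkeeping more transparent --- it isolates exactly where the constant $\ell-2$ comes from ($i-i'\le\ell-2$ combined with $\frac{k'-\ell}{k'-i'}\le 1$) and cleanly flags the degenerate cases ($\ell=2$, $\vali[(\ell+1)]=0$), which the paper's telescoping computation handles implicitly.
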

\begin{proof}
The first inequality holds because
\[
\FL(\vals)=\max_{2\le i\le\ell}g(i)\le\max_{2\le i\le\ell}\widehat{g}(i)=\EFL(\vals).
\]

We next prove the second inequality. Let the maximum in the definition of $\EFL(\vals)=\max\limits_{2\le j\le \ell}\widehat{g}(j)$ be attained at $j^*$, i.e., $\EFL(\vals)=\widehat{g}(j^*)$. Let $2\le i^*\le j^*\le k^*\le n$ be such that
\[
\widehat{g}(j^*)= g(i^*)\cdot\frac{k^*-j^*}{k^*-i^*}+g(k^*)\cdot\frac{j^*-i^*}{k^*-i^*}.
\]
We observe that $\FL(\vals)\ge i^*\cdot\vali[(i^*)]= g(i^*)$, since $2\le i^*\le j^* \le \ell$.

If $k^*\le\ell$, then $\FL(\vals)\ge g(k^*)$. Moreover, $\FL(\vals)$ is greater than any convex combination of $g(i^*)$ and $g(k^*)$ and we get $\FL(\vals)\ge\EFL(\vals)$ as desired.
On the other hand, if $k^*\ge\ell+1$, we have
\begin{eqnarray*}
\EFL(\val) & = & g(i^*)\cdot\frac{k^*-j^*}{k^*-i^*}+g(k^*)\cdot\frac{j^*-i^*}{k^*-i^*} \\
           & = & g(i^*)\cdot\frac{k^*-j^*}{k^*-i^*}+k^*\cdot\vali[(k^*)]\cdot\frac{j^*-i^*}{k^*-i^*}\\
					 & = & g(i^*)- g(i^*)\cdot\frac{j^*-i^*}{k^*-i^*} + (j^*-i^*)\cdot\vali[(k^*)]\cdot\frac{i^*}{k^*-i^*}+ (j^*-i^*)\cdot\vali[(k^*)]\\
					 & = & g(i^*)- \vali[(i^*)]\cdot\frac{i^*(j^*-i^*)}{k^*-i^*} + \vali[(k^*)]\cdot\frac{i^*(j^*-i^*)}{k^*-i^*}+ (j^*-i^*)\cdot\vali[(k^*)]\\
					 &\le& g(i^*)  + (j^*-i^*)\cdot\vali[(k^*)] \\
                     &\le& \FL(\vals) + (\ell - 2)\cdot\vali[(\ell+1)].	
\end{eqnarray*}
The last inequality holds, because $g(i^*) \le \FL(\vals)$, $j^*-i^* \le \ell -2$, and $\vali[(k^*)] \le \vali[(\ell+1)]$.
\end{proof}

One can further estimate $\FL(\vals)\ge(\ell-1)\cdot\vali[(\ell)]\ge (\ell - 2)\cdot\vali[(\ell+1)]$; this implies a trivial
upper bound of $2\FL(\vals)$ on $\EFL(\vals)$. As $\lambda_{\ell}$ is the exact competitive ratio
against the $\FL$ benchmark, the competitive ratio against $\EFL(\vals)$ lies between
$\lambda_{\ell}$ and $2\lambda_{\ell}$. These two bounds were the best currently known \cite{DevanurHY12}. However, these bounds are not tight. In particular, we can improve on the upper bound.

We decompose the upper bound of Lemma~\ref{le:EFOL} on $\EFL$ into the sum of two benchmarks $f_1(\vals)=\FL(\vals)$ and
$f_2(\vals)=(\ell-2)\cdot\vali[(\ell+1)]$. The competitive ratio against
the first benchmark is $\lambda_{\ell}$. On the other hand, the revenue of VCG mechanism selling
$\ell$ items is $\ell\cdot\vali[(\ell+1)]$, which shows that the competitive ratio against
$f_2(\vals)$ is $\frac{\ell-2}{\ell}$. By combining Lemma~\ref{le:benchmark_sum} and Lemma~\ref{le:EFOL} we obtain the following claim,
which improves the upper bound on the competitive ratio against $\EFL$ to $\lambda_{\ell}+1$.

\begin{theorem}
For multi-unit auctions with $\ell$ units for sale, there is a $\InParentheses{\lambda_{\ell}+\frac{\ell-2}{\ell}}$-competitive
auction against the $\EFL$ benchmark, where $\lambda_{\ell}$ is the optimal competitive ratio of unlimited supply auction with $\ell$ bidders against the $\FT$ benchmark.
\end{theorem}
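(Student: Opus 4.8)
The plan is to feed the upper bound from Lemma~\ref{le:EFOL} into the Decomposition lemma (Lemma~\ref{le:benchmark_sum}). I would write that bound as a sum of two benchmarks, $f_1(\vals)=\FL(\vals)$ and $f_2(\vals)=(\ell-2)\cdot\vali[(\ell+1)]$, so that $\EFL(\vals)\le f_1(\vals)+f_2(\vals)$ holds pointwise for every profile $\vals$. It then suffices to produce a good truthful auction against each of $f_1$ and $f_2$ separately.

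Against $f_1=\FL$ I would invoke the equivalence of \cite{GoldbergHKSW06} together with the tight bound of \cite{CGL14}: there is a truthful $\ell$-unit auction $\auction_1$ that is $\lambda_\ell$-competitive against $\FL$, where $\lambda_\ell$ is exactly the optimal ratio of an unlimited-supply auction with $\ell$ bidders against $\FT$. Against $f_2$, the natural choice is the $\ell$-unit Vickrey (VCG) auction $\auction_2$, which serves the top $\ell$ bidders and charges each the $(\ell+1)$-st highest bid, collecting deterministic revenue $\ell\cdot\vali[(\ell+1)]$. Since $f_2(\vals)=(\ell-2)\cdot\vali[(\ell+1)]$, the ratio $\frac{f_2(\vals)}{\auction_2(\vals)}=\frac{\ell-2}{\ell}$ is the same on every profile, so $\auction_2$ is truthful and $\frac{\ell-2}{\ell}$-competitive against $f_2$ (a ratio below $1$, since VCG overshoots this particular benchmark).

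Combining the two ingredients, Lemma~\ref{le:benchmark_sum} yields a truthful $\InParentheses{\lambda_\ell+\frac{\ell-2}{\ell}}$-competitive auction against $f_1+f_2$. The last step is to transfer this guarantee to $\EFL$: because $\EFL(\vals)\le f_1(\vals)+f_2(\vals)$ at every $\vals$, any lower bound on expected revenue stated in terms of $f_1+f_2$ is also a lower bound in terms of $\EFL$, so the same auction is $\InParentheses{\lambda_\ell+\frac{\ell-2}{\ell}}$-competitive against $\EFL$.

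All the genuine difficulty sits in the two cited ingredients — the $\FL$ equivalence/tight-ratio result and the validity of Lemma~\ref{le:EFOL} — rather than in the theorem itself, which is an assembly step. The one point I would be careful about is that the Decomposition lemma is stated for an \emph{exact} sum of benchmarks, so I must explicitly use the monotonicity observation that competing against the upper bound $f_1+f_2$ already suffices to compete against $\EFL$; this is easy but necessary.
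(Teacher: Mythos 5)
Your proposal is correct and follows exactly the paper's own argument: the same decomposition $\EFL(\vals)\le \FL(\vals)+(\ell-2)\cdot\vali[(\ell+1)]$ from Lemma~\ref{le:EFOL}, the $\lambda_\ell$-competitive auction against $\FL$ via the \cite{GoldbergHKSW06}/\cite{CGL14} equivalence, the $\ell$-unit Vickrey auction achieving ratio $\frac{\ell-2}{\ell}$ against the second term, and the Decomposition lemma to combine them. Your explicit remark that competing against the pointwise upper bound $f_1+f_2$ suffices for competing against $\EFL$ is a detail the paper leaves implicit, but it is the same proof.
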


On the other hand, it can be seen that the lower bound of $\lambda_\ell$ is not tight. In particular, for any truthful auction $\auction$, we can write the following lower bound by employing the equal revenue distribution $\dist^n$.\footnote{In the equal revenue distribution, all values $\vali$ are drawn identically and independently with probability $\Prx{\vali > x} = \frac{1}{x}$ for any $x\ge 1$. A remarkable property of this distribution is that any truthful auction has the same expected revenue $\nbidder$.}
\[
\frac{\Exlong[\vals\sim\dist^n]{\EFL(\vals)}}{\Exlong[\vals\sim\dist^n]{\auction(\vals)}} >
\frac{\Exlong[\vals\sim\dist^n]{\FL(\vals)}}{\Exlong[\vals\sim\dist^n]{\auction(\vals)}}=
\frac{n\cdot\lambda_\ell}{n}=\lambda_\ell,
\]
where we plugged in the first equality the expected value of $\FL$ over equal revenue distribution
$\Ex{\FL(\vals)}=n\cdot\lambda_\ell$. One can compute this expectation using certain properties of equal revenue distribution:

%

\[
\Ex{\FL(\vals)}=\Exlong[\vals\sim\dist^n]{\vali[(\ell+1)]}\cdot\Exlong[\hvals\sim\dist^\ell]{\FT(\hvals)}=
\Ex{\vali[(\ell+1)]}\cdot\ell\cdot\lambda_\ell.
\]
This lower bound for each $\ell$ is strictly greater than $\lambda_\ell$. Although, since $\EFL(\vals)\le\FT(\vals)$, this approach cannot give us a bound better than $\lambda_n$.
This naturally makes us conjecture that the competitive ratio against the benchmark $\EFL(\vals)$ for any supply $\ell$ never exceeds competitive ratio $\lambda=\lim_{n\to\infty}\lambda_n$ of the optimal unlimited supply auction.

\section{Downward-Closed environments}

In this section, we consider general downward-closed permutation environments. We denote by $\EF(\vals)$ the optimal revenue achievable in an envy-free allocation for the vector of values $\vals$. Our benchmark of interest is $\EFL(\vals)=
\EF(\vali[2],\vali[2],\dots,\vali[n])$.
The basic ingredients of our auction are biased random sampling and the profit extraction auction from \cite{HaH12b}. Our auction is slightly different from the one presented in \cite{HaH12b} and has a better competitive ratio of $6.51$ compared to $11$ of \cite{HaH12b}. It is much simpler than another auction with competitive ratio $7.5$ presented in \cite{DevanurHH13}, which has a few more components on the top of random sampling and profit extraction.

The profit extraction (PER) auction 
receives as a parameter a target valuation profile $\hvals$. When $\text{PER}^{\hvals}(\vals)$ is run on the actual valuation profile $\vals$, it is able to
extract revenue greater than or equal to the value of the envy-free benchmark $\EF(\hvals)$ as well as $\EFL(\hvals)$, if profile $\vals$ dominates $\hvals$ ($\vals\succeq\hvals$), i.e.,
$\vali[(i)]\ge\hvali[(i)]$ for every bidder $i$. If $\vals\not\succeq\hvals$, PER rejects all bidders.

\begin{lemma}[Ha and Hartline, 2012]
For any downward-closed permutation environment, there is a truthful
profit extraction auction $\text{PER}^{\hvals}(\vals)$ with a profit of at least $\EFL(\hvals)$, if
$\vali[(i)]\ge\hvali[(i)]$ for each bidder $i$.
\end{lemma}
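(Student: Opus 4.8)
The plan is to exhibit $\text{PER}^{\hvals}$ as a cost-sharing (Moulin-style) profit-extraction mechanism whose shares are read off from the optimal envy-free solution for the capped target profile, and then to argue that coordinatewise domination $\vals\succeq\hvals$ keeps that solution affordable. First I would recall the structure of envy-free optima in a single-parameter environment: envy-freeness forces the allocation to be monotone in value and all served agents to face a common per-unit price, so the optimum defining $\EFL(\hvals)=\EF(\hvali[2],\hvali[2],\dots,\hvali[n])$ is achieved by serving a top-ranked prefix of the capped profile at some common price $p^*$, with revenue equal to $p^*$ times the expected feasible service given to that prefix. This identifies, for each rank $i$ in the target's winning prefix, a threshold payment that is at most $\hvali[(i)]$.

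Next I would define the mechanism on the actual profile $\vals$. Following Ha and Hartline, $\text{PER}^{\hvals}$ turns these target thresholds into per-agent cost shares and runs an iterative rejection process: tentatively serve a maximal feasible set, then repeatedly drop any tentatively served agent whose value fails to clear the share it is asked to pay, until the set stabilizes. Truthfulness is the routine part: the offer a surviving agent faces does not depend on its own reported value (it is the critical value at which it would be dropped), so the mechanism reduces to each winner paying a bid-independent threshold, which is dominant-strategy incentive compatible; equivalently, one invokes that cross-monotone cost-sharing mechanisms are strategyproof. When $\vals\not\succeq\hvals$ the process is arranged to terminate with the empty allocation, matching the ``reject all'' clause.

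The crux is the revenue guarantee when $\vals\succeq\hvals$. Here I would use that the environment is a symmetric permutation environment, so the feasibility of serving the top $k$ ranks depends on ranks alone; hence the feasible prefix-allocation achieving $\EFL(\hvals)$ on the target transplants rank-for-rank onto $\vals$. Since $\vali[(i)]\ge\hvali[(i)]$ for every $i$ and each target share is bounded by the corresponding $\hvali[(i)]$, every agent in the target's winning prefix can afford its share under $\vals$ as well; therefore the rejection process cannot strip that prefix away, and the extracted revenue is at least $\EFL(\hvals)$.

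I expect the main obstacle to be reconciling the rank-based, probabilistic feasibility of the permutation environment with obtaining truthfulness and \emph{full} extraction of the capped benchmark simultaneously. Concretely, one must choose the per-agent shares so that (i) they are cross-monotone enough to make the ``maximal affordable feasible set'' well-defined and the induced thresholds bid-independent, and (ii) their sum over any affordable prefix still recovers all of $\EFL(\hvals)$ rather than a constant fraction of it. The capping to $(\hvali[2],\hvali[2],\dots,\hvali[n])$ is precisely what removes the single-dominant-bidder degeneracy that would otherwise obstruct a common-price extraction, so I would verify that it is the capped envy-free optimum $\EFL$, and not the uncapped $\EF$, that the cost-sharing process can certify.
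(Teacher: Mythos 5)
The paper offers no proof of this lemma to compare against: it is imported verbatim as a black box from Ha and Hartline \cite{HaH12b}, so your proposal must stand on its own as a reconstruction of their argument, and as such it has two genuine gaps. The first is structural: envy-freeness in a permutation environment does \emph{not} force all served agents onto a common per-unit price, and the optimum defining $\EFL(\hvals)$ is in general \emph{not} achieved by serving a top prefix at a single price $p^*$. It is achieved by a lottery menu with two distinct price levels, which is exactly why the paper's formula for the $\ell$-unit case takes the \emph{concave envelope} $\widehat{g}$ of the single-price revenues $g(j)=j\cdot\vali[(j)]$, and why Lemma~\ref{le:EFOL} shows $\EFL$ can strictly exceed the best fixed-price revenue $\FL$ (by as much as $(\ell-2)\cdot\vali[(\ell+1)]$). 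A construction whose cost shares are read off from one posted price can therefore certify only the $\FL$-type quantity of the capped target, which is a strictly weaker guarantee than the $\EFL(\hvals)$ claimed by the lemma.

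The second gap is the truthfulness argument. Moulin cross-monotone cost-sharing requires the share offered to an agent to be independent of that agent's own report (it may depend only on the candidate served set); your shares are indexed by \emph{rank} in the sorted profile, and an agent's rank depends on its own bid, so bid-independence fails and the Moulin theorem cannot be invoked --- this is a real difference from the digital-goods profit extractor, where the share $R/k$ is anonymous and depends only on the set size. The standard repair is to argue the allocation rule (rank-based allocation plus the domination test) is monotone and to charge threshold payments, which restores truthfulness but breaks your revenue accounting: one must then show that the \emph{thresholds}, not the affordable shares, sum to at least $\EFL(\hvals)$, and this is precisely where the capping at $\hvali[2]$ enters, since the top agent's threshold cannot be pushed up toward $\hvali[1]$. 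You flag exactly this tension yourself as conditions (i) and (ii) ``to be verified,'' but verifying them is the entire content of the lemma; as written, the proposal is a plan for a proof rather than a proof.
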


Our auction is quite simple: with some probability $p$ we run the single-item Vickrey auction; with probability $1-p$ we run the following $\sigma$-biased random-sampling profit-extraction auction (denoted by $\sigma$-BSPE).

\begin{itemize}
\item Divide all bidders into two groups market $M$ and sample $S$: Place the \textit{two highest} bids in $M$. Sample the rest bids independently with
      probability $\sigma<1/2$ in $S$ and with probability $1-\sigma$ in $M$.
\item Let $\hvals=\vals_{S}$. Allocate to the winners of $\text{PER}^{\hvals}(\vals_{_M})$.
\end{itemize}

\begin{theorem}
For any downward-closed permutation environment, $\sigma$-BSPE is a $6.51$-competitive truthful auction
against the envy-free benchmark $\EFL(\vals)$.
\end{theorem}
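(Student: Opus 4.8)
The plan is to fix an arbitrary valuation profile $\vals$, set $R=\EFL(\vals)$, and lower bound the expected revenue of the combined auction, which by construction equals $p\cdot\vali[(2)]+(1-p)\cdot\Ex{\sigma\text{-BSPE}(\vals)}$, since single-item Vickrey deterministically collects the second price $\vali[(2)]$. The heart of the matter is the second term. Conditioning on the random partition of the bidders into the market $M$ and the sample $S$, the profit-extraction guarantee (the Ha--Hartline lemma) says that $\text{PER}^{\vals_S}(\vals_M)$ collects at least $\EFL(\vals_S)$ whenever the market dominates the sample, $\vals_M\succeq\vals_S$, and $0$ otherwise. Hence $\Ex{\sigma\text{-BSPE}(\vals)}\ge\Ex{\EFL(\vals_S)\cdot\ind[\vals_M\succeq\vals_S]}$, and everything reduces to lower bounding this expectation by a constant fraction of $R$.

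I would then split this expectation into a sampling-efficiency estimate and a dominance estimate. For efficiency, I would use that $\EFL$ is determined by an optimal threshold/price schedule (the concave-envelope structure underlying the envy-free benchmark) and that capping the top bid at $\vali[(2)]$ makes the benchmark smooth, so that no single bidder contributes more than $\vali[(2)]$. Fixing the schedule that attains $R$ on $\vals$, each non-top bidder it charges lands in $S$ independently with probability $\sigma$, so a $\sigma$-fraction of the benchmark's contribution survives in $\vals_S$ in expectation; this yields $\Ex{\EFL(\vals_S)}\gtrsim\sigma R$ up to lower-order terms absorbed by the cap and by Vickrey. I would make this rigorous by charging $R$ to its contributing bidders and applying linearity of expectation, using that $\EFL(\vals_S)$ is a maximum over schedules and therefore at least the realized value of the fixed schedule restricted to $S$.

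The dominance event is the crux. Using the equivalent formulation that $\vals_M\succeq\vals_S$ holds iff at every threshold $t$ the market contains at least as many bidders of value $\ge t$ as the sample does, I would reduce $\ind[\vals_M\succeq\vals_S]$ to a ballot/random-walk condition: scanning bidders from the largest value downward, the running count of $M$-bidders must never fall behind the running count of $S$-bidders. Forcing the two highest bids into $M$ gives a head start of two, and since each remaining bidder independently joins $S$ with probability $\sigma<1/2$, the probability that $S$ ever overtakes $M$ is controlled by a geometric-type bound in $\sigma$. The essential tension is that enlarging $\sigma$ improves the sampling efficiency $\sigma R$ but degrades the dominance probability; the role of the Vickrey component (and of reserving the top bids for $M$) is precisely to absorb the residual cases---few bidders, or benchmark mass concentrated near the top---where sampling is unreliable but $\vali[(2)]$ is already a good fraction of $R$.

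Finally, I would combine the two estimates, optimize jointly over the mixing probability $p$ and the sampling rate $\sigma$, and verify that the optimal choice delivers the bound $1/6.51$. I expect the dominance estimate to be the main obstacle, both because it must be coupled with the sampling-efficiency bound on the \emph{same} random partition---so the two expectations cannot simply be multiplied but must be handled by conditioning or a correlation argument---and because pinning down the sharp constant $6.51$ requires the analysis of $\Ex{\EFL(\vals_S)}$ to exploit more than a single fixed price, namely the full concave-envelope structure of the envy-free benchmark evaluated on the sample.
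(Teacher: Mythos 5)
Your plan has the same skeleton as the paper's proof---profit extraction conditional on the dominance event $\vals_M\succeq\vals_S$, a gambler's-ruin bound on that event using the head start of two, a $\sigma$-fraction sampling bound, and a Vickrey component covering the top-two mass---but the two steps you leave open are precisely where the proof lives, and one of your stated estimates is false as written. The estimate $\Ex{\EFL(\vals_S)}\gtrsim\sigma R$ with $R=\EFL(\vals)$ fails when the benchmark mass sits on the two highest bids: for $\vals=(v,v,0,\dots,0)$ we have $R=2v$ while $\vals_S$ contains only zeros, so the left-hand side is $0$; the terms you hope to ``absorb by the cap and by Vickrey'' can be all of $R$, not lower order. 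The paper's fix is to decompose the benchmark \emph{before} any sampling estimate: by subadditivity of $\EF$ (Hartline--Yan), $\EFL(\vals)=\EF((\vali[(2)],\vali[(2)],\vals_{-\{1,2\}}))\le\EF((\vali[(2)],\vali[(2)]))+\EF(\vals_{-\{1,2\}})$. Vickrey is $2$-competitive against the first summand (which is at most $2\vali[(2)]$ by individual rationality), $\sigma$-BSPE is measured only against the second summand---which excludes the top two bidders entirely, so $\Ex{\EF(\vals_S)}\ge\sigma\cdot\EF(\vals_{-\{1,2\}})$ holds cleanly with no error terms---and Lemma~\ref{le:benchmark_sum} combines the two ratios. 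Your ``joint optimization over $p$ and $\sigma$'' is exactly this mixture, but it only becomes rigorous after the subadditive split.

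The second gap is the correlation you flag as ``the main obstacle'' but do not resolve. The resolution is a one-line monotonicity trick, not a delicate conditioning argument: since $\EF(\vals_S)\le\EF(\vals_{-\{1,2\}})$ pointwise (the sample is always a subset of the non-top bidders),
\[
\Ex{\EF(\vals_S)\cdot\ind\InBrackets{\vals_M\succeq\vals_S}}\;\ge\;\Ex{\EF(\vals_S)}-\EF(\vals_{-\{1,2\}})\cdot\Prx{\vals_M\not\succeq\vals_S}\;\ge\;\InParentheses{\sigma-\Big(\frac{\sigma}{1-\sigma}\Big)^3}\EF(\vals_{-\{1,2\}}),
\]
where $\big(\frac{\sigma}{1-\sigma}\big)^3$ is the probability that the biased walk with head start two ever drops below zero. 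Optimizing gives $\sigma\approx0.29$ and a constant of roughly $0.22$, hence ratio $4.51$ for BSPE and $4.51+2=6.51$ overall. Finally, your closing expectation is misplaced: pinning down the constant does not require exploiting the concave-envelope structure of $\EFL$ on the sample; the paper uses only abstract properties of $\EF$---monotonicity, subadditivity, and the per-bidder $\sigma$-sampling bound.
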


\begin{proof}
For any random coin flips of $\sigma$-BSPE, the allocation rule of $\text{PER}^{\hvals}$ is monotone. This implies that $\sigma$-BSPE also has a monotone allocation rule. Therefore, since our environment is a single-parameter domain, $\sigma$-BSPE allocation with the threshold payment rule makes the auction universally truthful.

We next estimate the expected revenue of $\sigma$-BSPE. We follow closely the proof strategy described
in \cite{HaH12b}, the main difference being in the way we deal with the benchmark $\EFL(\vals)$.
We note that if $\vals_{M}\succeq\vals_{S}$, then the total sum of the threshold payments of
$\text{PER}^{\hvals}(\vals_{M})$ is at least $\EF(\vals_{S})$; we further observe that the
threshold payments of $\sigma$-BSPE can be only larger than that, as we could only increase
payments of the two highest bidders.

\begin{claim}
The probability that $\vals_{M}\succeq\vals_{S}$ is at least $1-(\frac{\sigma}{1-\sigma})^3$.
\end{claim}
\begin{proof}
Sort all bidders in the original profile $\vals: \vali[(1)]\ge \dots\ge \vali[(n)]$ (without loss of
generality we assume that all inequalities are strict). We simulate our random sampling process
by independently flipping a biased coin for each bidder $(i)$ in this order. Each time we count
the difference between the number of bidders in $M$ and $S$. Note that because we always place the highest two bids in $M$,
after the first two steps the difference becomes two. Note that $\vals_{M}\not\succeq\vals_{S}$ if and only if at some step
$(i)$ this difference becomes negative. We next estimate the probability that this
event never happens.

We consider an infinite random walk on a one-dimensional infinite line;
each time we move to the left with probability $\sigma$ and to the right with probability $1-\sigma$.
It is well known that the probability that such a random walk starting at a point $x$ eventually
makes one step to the left from $x$ is $\frac{\sigma}{1-\sigma}$. As our random walk starts at point
$2$, it would take three such steps to move below $0$. The probability of this event is
$\InParentheses{\frac{\sigma}{1-\sigma}}^3$. Therefore, the probability that this never happens after
$n$ steps is at least $1-\InParentheses{\frac{\sigma}{1-\sigma}}^3$.
\end{proof}

We conclude that the expected revenue of the $\sigma$-BSPE is at least
\begin{align*}
\Ex{\sigma\mbox{-BSPE}} &\geq  \Ex{\EF(\vals_{S}) ~|~ \vals_{M}\succeq\vals_{S} } \cdot \Prx{\vals_{M}\succeq\vals_{S}}\\
& =\Ex{\EF(\vals_{S})}  -  \Ex{\EF(\vals_{S})  ~|~ \vals_{M}\not \succeq\vals_{S} } \cdot \Prx{\vals_{M}\not \succeq\vals_{S}}\\
&\geq  \sigma\cdot \EF(\vals_{-\{1,2\}}) - \EF(\vals_{-\{1,2\}}) \cdot \Prx{\vals_{M}\not \succeq\vals_{S}}\\
&\geq \InParentheses{\sigma - \Big(\frac{\sigma}{1-\sigma}\Big)^3}  \EF(\vals_{-\{1,2\}}),
\end{align*}
where $\vals_{-\{1,2\}}$ is the bid vector without first two highest bids. The maximum of the function $\InParentheses{\sigma - \InParentheses{\frac{\sigma}{1-\sigma}}^3}$ is attained at $\sigma\approx0.29$ with a rough value of $0.22$. Thus, the competitive ratio of $\sigma$-BSPE against the benchmark $\EF(\vals_{-\{1,2\}})$ is $4.51$.

On the other hand, by running the single-item Vickrey auction, we extract revenue of at least
$\frac{1}{2}\cdot\EF((v_2,v_2))$. Note that by subadditivity of $\EF$ (shown in \cite{HartlineY11}) we have $\EF((v_2,v_2))+\EF(\vals_{-\{1,2\}})\ge \EF((v_2,v_2,\vals_{-\{1,2\}}))=\EFL(\vals)$.
Therefore, according to Lemma~\ref{le:benchmark_sum} one can achieve the competitive ratio of $4.51+2=6.51$ against the benchmark $\EF((v_2,v_2))+\EF(\vals_{-\{1,2\}})$. Thus, we obtain a $6.51$-competitive auction against $\EFL(\vals)$, which runs $0.22$-BSPE with probability $4.51/6.51$ and the single-item Vickrey auction with probability $2/6.51$.
\end{proof}

\section{Online Auctions}

Let $\{\Moff[n]\}_{n=2}^{\infty}$ be a sequence of $\beta$-competitive offline digital goods auctions against a benchmark $f(\cdot)$ for each number of bidders
$n$. To simplify notation, we refer $\{\Moff[n]\}_{n=2}^{\infty}$ as $\Moff$ auction omitting the number of bidders when it could be inferred from the context.

\begin{theorem}[Koutsoupias, Pierrakos \cite{KoutsoupiasP13}]
Let $\Moff$ be a $\beta$-competitive offline auctions against the $\FT$ benchmark. The online sampling auction is a $2\beta$-competitive against the $\FT$ benchmark
with bidders arriving in a random order.
\end{theorem}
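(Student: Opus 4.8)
The plan is to instantiate the online sampling auction as the following black-box use of the offline family $\Moff$. When the $t$-th bidder arrives in the random arrival order $\pi$, let $T_t=\{\pi(1),\dots,\pi(t)\}$ denote the set of bidders seen so far; run the offline auction $\Moff[t]$ on the bids of $T_t$ and serve the current arrival $\pi(t)$ exactly as $\Moff[t]$ does. Since $\Moff[t]$ is truthful and the environment is single-parameter, the decision it makes for $\pi(t)$ is a threshold price depending only on the other $t-1$ bids, all of which have already arrived; hence the current bidder faces a take-it-or-leave-it price computable from the past, and the resulting online auction is universally truthful. First I would record this construction and verify its truthfulness.

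Next I would compute the expected revenue using the symmetry of the random order. Conditioned on the unordered set $T_t$, the last of the first $t$ arrivals, $\pi(t)$, is uniform over $T_t$; therefore the expected payment collected at step $t$ equals $\frac1t$ times the total revenue of $\Moff[t]$ on $T_t$. Summing over $t$ (the $t=1$ term contributes nothing, as $\FT$ requires at least two winners) gives
\[
\Ex{\text{online}}=\sum_{t=2}^{n}\frac1t\,\Ex{\Moff[t](T_t)}\ \ge\ \frac1\beta\sum_{t=2}^{n}\frac1t\,\Ex{\FT(T_t)},
\]
where the inequality is the $\beta$-competitiveness of $\Moff[t]$ applied to each realization of $T_t$, together with monotonicity of $\FT$.

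It then remains to prove the combinatorial inequality $\sum_{t=2}^{n}\frac1t\Ex{\FT(T_t)}\ge\frac12\FT(\bids)$. I would fix the benchmark maximizer, writing $\FT(\bids)=k^\ast v^\ast$ where $v^\ast$ is the $k^\ast$-th largest bid, so that at least $k^\ast$ bidders bid $\ge v^\ast$; call these the special bidders and let $X_t$ count the special bidders in $T_t$. Whenever $X_t\ge 2$ we have $\FT(T_t)\ge X_t v^\ast$, so $\Ex{\FT(T_t)}\ge v^\ast\Ex{X_t\ind[X_t\ge2]}=v^\ast(\Ex{X_t}-\Prx{X_t=1})$. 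Since $X_t$ is hypergeometric with $\Ex{X_t}=tk^\ast/n$, the only nonelementary piece is $\sum_{t=1}^{n}\frac1t\Prx{X_t=1}$; writing $\Prx{X_t=1}=k^\ast\binom{n-k^\ast}{t-1}/\binom{n}{t}$ and using a falling-factorial (hockey-stick) identity, this sum equals exactly $1$ for every $k^\ast$. Substituting back yields $\sum_{t=2}^n\frac1t\Ex{\FT(T_t)}\ge\bigl(1-\tfrac1{k^\ast}\bigr)\FT(\bids)\ge\frac12\FT(\bids)$, which combined with the previous display gives the claimed $2\beta$ ratio.

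The hard part is this last step, and specifically the interaction between the ``at least two winners'' clause in $\FT$ and the $\frac1t$ weighting: this is exactly what produces the factor $2$, since the prefix must already contain two special bidders before it can certify any benchmark revenue, the worst case being $k^\ast=2$, for which $1-1/k^\ast=\tfrac12$ is tight (and where larger $k^\ast$, e.g.\ the all-equal profile, is strictly easier). A secondary point to check is that the symmetry argument of the second step remains valid when $\Moff[t]$ is randomized, by conditioning separately on $T_t$ and on the internal coins of $\Moff[t]$.
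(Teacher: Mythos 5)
Your proof is correct, and it is worth noting that the paper never proves this statement itself: it is imported from \cite{KoutsoupiasP13}, and the paper only develops proofs for its refinements (Theorem~\ref{th:plustwo} and the $\maxV$ variant). You share with that machinery the first step, namely the symmetry identity $\Ex{\Mon(\bids)}=\sum_{t}\frac{1}{t}\Ex{\Moff[t](T_t)}$ over random prefixes $T_t$ (the paper writes this as $\Mon(\bids)=\sum_n \frac{1}{n}A^n(\bids^{[n]})$, together with the same observation that truthfulness makes the current bidder's price depend only on earlier bids). After that you take a genuinely different route. The paper's analogous proofs proceed by induction on the number of bidders, establishing for each $k$ the deterministic inequality $\frac{1}{k}\sum_{i}\FT(\bidsmi^{[k]})\ge\FT(\bids^{[k]})-\frac{1}{k}\max\InParentheses{4\bidi[2],3\bidi[3],\dots,k\bidi[k]}$, which isolates the exact per-step deficit; that refinement is what powers the paper's improvement from $2\beta$ to $\beta$ against the modified benchmark, and hence the $4.12$ and $2.42$ results. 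You instead argue globally: fix the maximizer $\FT(\bids)=k^*v^*$, track the hypergeometric count $X_t$ of special bidders in the random prefix, and collapse the total loss over all prefix lengths via the exact identity $\sum_{t=1}^n\frac{1}{t}\Prx{X_t=1}=1$. That identity is correct (using $\frac{1}{t\binom{n}{t}}=\frac{1}{n\binom{n-1}{t-1}}$ it reduces to $\sum_s \binom{m}{s}/\binom{n-1}{s}=\frac{n}{n-m}$ with $m=n-k^*$, a hockey-stick computation), and the surrounding steps check out, yielding $(1-1/k^*)\FT(\bids)\ge\frac{1}{2}\FT(\bids)$; on the profile $(v,v,0,\dots,0)$ your chain of inequalities is tight, so the argument correctly locates where the factor $2$ comes from. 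The trade-off: your counting argument is a clean, self-contained proof of the cited $2\beta$ bound, but because it aggregates the loss across all steps it does not expose the per-prefix benchmark $\max(4\bidi[2],3\bidi[3],\dots,k\bidi[k])$ that the paper's inductive formulation isolates and then exploits to beat $2\beta$.
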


The online sampling auction by \cite{KoutsoupiasP13} uses a black box reduction from an offline  digital-goods auction $\Moff$ to construct an online competitive auction $\Mon$. Their auction, upon the arrival of each bidder $k$, observes first $k-1$ bids $\bids^{[k-1]}\triangleq (b_1,\ldots,b_{k-1})$ and runs $\Moff[k](\bids^{[k-1]})$ for bidder $k$.

In particular, \cite{KoutsoupiasP13} used the offline auction of \cite{HartlineM05} with a competitive ratio of $3.24$, they obtained an upper bound of $6.48$.
In a recent paper \cite{CGL14}, it was shown that the optimal competitive ratio of $\Moff$ is in fact $2.42$, which gives an upper bound of $4.84$ for the online problem. There is also a lower bound of $4$ in \cite{KoutsoupiasP13} for online auctions with only $2$ bidders.

\begin{corollary}
The optimal competitive ratio of online auctions is between $4$ and $4.84$.
\end{corollary}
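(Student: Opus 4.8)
The plan is to establish the two bounds separately, each as a direct consequence of a result already in hand. For the upper bound of $4.84$, I would invoke the Koutsoupias--Pierrakos black-box reduction (the theorem stated immediately above) instantiated with the optimal offline auction. Concretely, take $\Moff$ to be the family of offline digital goods auctions whose competitive ratio against the $\FT$ benchmark equals $\beta = 2.42$, the optimal value established in \cite{CGL14}. Feeding this $\Moff$ into the online sampling construction $\Mon$ produces, by that theorem, a $2\beta = 4.84$-competitive online auction against $\FT$ when bidders arrive in random order. Since such an auction exists, the optimal online competitive ratio is at most $4.84$.

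For the lower bound of $4$, I would rely on the hard instance of \cite{KoutsoupiasP13} already exhibited in the two-bidder case. The key point is that a lower bound proved on a restricted family of inputs transfers upward to the general problem: if no online auction beats ratio $4$ on the worst-case two-bidder input, then a fortiori none beats ratio $4$ once $\nbidder$ is allowed to be arbitrary, because the optimal competitive ratio is the infimum over online auctions $\auction$ of the worst-case value of $f(\bids)/\Ex{\auction(\bids)}$, and enlarging the set of admissible $\bids$ can only increase that worst case. Hence the optimal ratio is at least $4$, and combining the two parts places it in $[4,4.84]$.

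The argument itself is immediate; the substantive content lives entirely in the two external ingredients it draws upon, namely the tightness of the offline ratio $2.42$ in \cite{CGL14} and the construction of the two-bidder lower-bound instance in \cite{KoutsoupiasP13}, both of which I may assume. Consequently the only step warranting any care is the \emph{direction} of the lower-bound transfer: I must argue from the special two-bidder case to the general case, since a bound on a subfamily of inputs lower-bounds the worst case over all inputs, whereas the reverse implication would be invalid.
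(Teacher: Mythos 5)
Your proposal is correct and follows the paper's own reasoning exactly: the upper bound comes from instantiating the Koutsoupias--Pierrakos $2\beta$ reduction with the optimal $2.42$-competitive offline auction of \cite{CGL14}, and the lower bound is inherited from the two-bidder hard instance of \cite{KoutsoupiasP13}. Your extra care about the direction of the lower-bound transfer (worst case over all inputs dominates worst case over the two-bidder subfamily) is sound and is the same implicit step the paper relies on.
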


We next propose another simple black-box reduction from offline to online auctions with a better competitive guarantee. Any online auction can be thought of as a sequence of offline auctions run for a set of bidders already present at each time. The main idea of our design is to tailor each of our offline auctions to a different from $\FT$ benchmark so that their combination has good performance with respect to $\FT$.

\begin{theorem}
Let $f(\bids)=\max(4\bidi[2],3\bidi[3],4\bidi[4],\dots,k\bidi[k])$ and
 $\Moff$ be a $\beta$-competitive auction against the $f(\cdot)$ benchmark. Then there is a $\beta$-competitive online auction against $\FT$, where bidders
arrive in a random order.
\label{th:plustwo}
\end{theorem}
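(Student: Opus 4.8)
The plan is to adapt the prefix reduction of Koutsoupias and Pierrakos, but to charge the revenue of the offline auction against the \emph{larger} benchmark $f$ on random prefixes, so that no multiplicative factor is lost. I would define the online auction as follows: when the $k$-th bidder arrives (for $k\ge 2$), run $\Moff[k]$ on the $k$ bids present so far and offer the newcomer the threshold price that $\Moff[k]$ assigns to it. Because $\Moff[k]$ is a truthful digital-goods (bid-independent) auction, this price depends only on the previous $k-1$ bids, so the rule is implementable online; and since each bidder faces a posted price independent of its own report, the auction is universally truthful for every fixed arrival order. Throughout I write $\bidi[(j)]$ for the $j$-th largest bid and read $f$ as $\max_{2\le j}c_j\bidi[(j)]$ with $c_2=4$ and $c_j=j$ for $j\ge 3$.

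Next I would fix the bid multiset and take expectations over the uniformly random arrival order. Let $p_k$ be the payment collected from the $k$-th arrival, so the expected revenue is $\sum_{k\ge 2}\Ex{p_k}$. The one structural observation needed is a symmetry step: since $f$ is symmetric we may take $\Moff$ anonymous (relabel bidders at random), whence $\mathrm{Rev}(\Moff[k](T))$ is the order-independent sum of the $k$ bidders' bid-independent payments; conditioned on the set $T$ of the first $k$ arrivals the newcomer is uniform in $T$, so $\Ex{p_k}=\frac1k\,\Ex{\mathrm{Rev}(\Moff[k](T))}\ge\frac1{\beta k}\,\Ex{f(T)}$ by $\beta$-competitiveness of $\Moff$. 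This reduces the theorem to the combinatorial inequality
\[
\sum_{k=2}^{n}\frac1k\,\Ex[\,|T|=k\,]{f(T)}\ \ge\ \FT(\bids).
\]

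To attack it, let $m$ attain $\FT(\bids)=m\cdot\bidi[(m)]$, call the $m$ highest bidders \emph{special}, and let $X_k$ count the special bidders among the first $k$ arrivals. When $X_k\ge 2$ the $X_k$-th largest bid of $T$ is at least $\bidi[(m)]$ (it dominates $X_k$ special bids), so a single term of $f$ gives $f(T)\ge c_{X_k}\,\bidi[(m)]$. Using the pointwise identity $c_{X_k}\ind[X_k\ge 2]=X_k-\ind[X_k=1]+2\,\ind[X_k=2]$, the inequality follows once I show $\sum_{k=2}^{n}\frac1k\,\Ex{c_{X_k}\ind[X_k\ge 2]}=m$.

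The hard part — and essentially the only nonroutine computation — is this last identity, for which the clean lemma to establish is $\sum_{k\ge j}\frac1k\Prx{X_k=j}=\frac1j$. I would prove it in one probabilistic line: writing $\rho_j$ for the arrival time of the $j$-th special bidder, conditioning on the first-$k$ set gives $\Prx{\rho_j=k}=\frac jk\Prx{X_k=j}$ (given $X_k=j$ the newcomer is special with conditional probability $j/k$), and summing over $k$ yields $\sum_k\frac1k\Prx{X_k=j}=\frac1j\sum_k\Prx{\rho_j=k}=\frac1j$. Applying this for $j=2$, applying it for $j=1$ after removing the $k=1$ term $\Prx{X_1=1}=m/n$, and using $\frac1k\Ex{X_k}=\frac mn$, I obtain $\sum_{k=2}^n\frac1k\Ex{c_{X_k}\ind[X_k\ge 2]}=(m-\frac mn)-(1-\frac mn)+2\cdot\frac12=m$. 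This is exactly where the boosted coefficient $c_2=4$ is calibrated, and it closes the reduction with competitive ratio precisely $\beta$.
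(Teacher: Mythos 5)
Your proof is correct, and while it builds the \emph{same} online auction as the paper (run $\Moff[k]$ on the bids present when the $k$-th bidder arrives, and use the resulting bid-independent price as a posted price), the analysis takes a genuinely different route. Both arguments start from the identical reduction: under random arrival order the online auction collects in expectation $\frac{1}{k}$ of the revenue of $\Moff[k]$ run on the uniformly random set $T_k$ of the first $k$ arrivals. From there the paper proceeds by \emph{induction} on the number of bidders: it charges only the newest round's auction $A^k$ against $f$, invokes the inductive hypothesis that the earlier rounds are already $\beta$-competitive against $\FT$, and closes the step with the deterministic one-bid-removal inequality $\frac{1}{k}\sum_{i=1}^{k}\FT(\bidsmi^{[k]})\ge\FT(\bids^{[k]})-\frac{1}{k}f(\bids^{[k]})$, so that the $\frac{1}{k}f$ deficit is exactly repaid by the new auction. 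You instead charge \emph{every} round against $f$ and prove the global inequality $\sum_{k=2}^{n}\frac{1}{k}\Ex{f(T_k)}\ge\FT(\bids)$ outright, via the pointwise rewriting $c_{X_k}\ind[X_k\ge 2]=X_k-\ind[X_k=1]+2\,\ind[X_k=2]$ and the arrival-time identity $\sum_{k\ge j}\frac{1}{k}\Prx{X_k=j}=\frac{1}{j}$ (proved by the stopping-time observation $\Prx{\rho_j=k}=\frac{j}{k}\Prx{X_k=j}$); all your intermediate steps check out, including the final accounting $\bigl(m-\frac{m}{n}\bigr)-\bigl(1-\frac{m}{n}\bigr)+1=m$. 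What each buys: the paper's induction makes transparent where the boosted coefficient $4\bidi[2]$ comes from (removing either of the two top bids destroys the term $2\bidi[2]$, which is visible in the $\frac{2k-4}{k}$ factor), and it uses only an elementary deterministic inequality; your argument avoids induction entirely and yields an \emph{exact} identity $\sum_{k=2}^{n}\frac{1}{k}\Ex{c_{X_k}\ind[X_k\ge 2]}=m$, showing that the charging scheme has no slack and that the coefficients $c_2=4$, $c_j=j$ for $j\ge 3$ are precisely calibrated for this reduction.
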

\begin{proof}
Any truthful online auction $\Mon$ can be viewed as a weighted combination of offline auctions $\{A^{n}\}_{n=2}^{\infty}$ running on $n=1,2\dots$ bidders
\[
\Mon(\bids)=\sum_{n=1}\frac{1}{n}\cdot A^n(\bids^{[n]}).
\]
Indeed, each time when $\Mon$ observes first $n-1$ bids $\bids^{[n-1]}$ and offers a price to bidder $n$, it could have seen any combination of $n-1$ bids among $\bids^{[n]}$ equally likely, since bidders arrive uniformly at random. Therefore, $\Mon$ derives $\frac{1}{n}$
of the revenue of offline auction $A^n(\bids^{[n]})$. We denote by $\Mon[k]$ the online auction run only up to $k$ rounds, i.e.,
\[
\Mon[k](\bids)=\sum_{n=1}^{k}\frac{1}{n}\cdot A^n(\bids^{[n]}).
\]

We are going to construct our online auction $\Mon$ inductively at each time increasing the number of bidders by one. Namely, we assume that for all $n=k-1$ bidders our $\Mon$ auction is $\beta$-competitive. Next we specify an offline auction $A^k$ which together with $\Mon[k-1]$ is $\beta$-competitive for $k$ bidders.

For $n=1$ bidder $\FT$ is $0$, so $\Mon$ is competitive regardless of $A^1$. By induction hypothesis, we know that for any fixed bid vector $\bids^{[k-1]}$,
\[
\beta\cdot \Mon[k-1](\bids^{[k-1]})\ge\FT(\bids^{[k-1]}).
\]
Since the first $k-1$ bids are uniformly selected from $\bids^{[k]}$, we have
\[
\beta\cdot \Mon[k-1](\bids^{[k]}) = \frac{1}{k}\sum_{i=1}^{k}\beta\cdot \Mon[k-1](\bidsmi^{[k]}) \ge\frac{1}{k}\sum_{i=1}^{k}\FT(\bidsmi^{[k]}).
\]

Let us sort the bids in $\bids^{[k]}$ by $\bidi[1]\ge\bidi[2]\ge\dots\ge\bidi[k]$. We compare the revenue of $\Mon[k-1]$ with each $\bidi[\ell]$, for $2\le\ell\le k$. We have $\FT(\bidsmi^{[k]})\ge \ell\cdot\bidi[\ell]$ for every $i>\ell$. If $\ell>2$ and $i\le\ell$, then
$\FT(\bidsmi^{[k]})\ge (\ell-1)\cdot\bidi[\ell]$. Unfortunately, for $\ell=2$, we cannot write
$\FT(\bidsmi[1]^{[k]})\ge \bidi[2]$ or $\FT(\bidsmi[2]^{[k]})\ge \bidi[2]$. Therefore,
\begin{eqnarray*}
\frac{1}{k}\sum_{i=1}^{k}\FT(\bidsmi^{[k]}) &\ge& \max\InParentheses{\frac{2k-4}{k}\bidi[2],
\frac{3k-3}{k}\bidi[3],\dots,\frac{\ell(k-\ell)+(\ell-1)\ell}{k}\bidi[\ell],\dots,\frac{k^2-k}{k}\bidi[k]}\\
       &\ge& \FT(\bids^{[k]})-\frac{1}{k}\max(4\bidi[2],3\bidi[3],4\bidi[4],\dots,k\bidi[k]).
\end{eqnarray*}
We want the offline auction $A^k$ to have good performance against $f(\bids)=\max(4\bidi[2],3\bidi[3],4\bidi[4],\dots,k\bidi[k])$.
We know that there is a $\beta$-competitive auction $\Moff$ with respect to this benchmark  $f(\cdot)$. Thus, there is a $\beta$-competitive auction for $k$ bidders in the online setting. This completes the proof.
\end{proof}

%

Note that $f(\bids)\le\FT(\bids)+2\bidi[2].$ According to Lemma~\ref{le:benchmark_sum}, we can run
a mixture of the optimal auction against $\FT$ and single-item Vickery
auction against $2\bidi[2]$ to achieve a $(\lambda+2)$-competitive auction
with respect to $f(\cdot)$, which is already an improvement over the result of \cite{KoutsoupiasP13}.
However, we can actually derive the optimal ratio using the same approach as that for $\FT$, which yields an even better competitive ratio for the online auction problem.

\begin{theorem}\label{them-ratio}
The optimal competitive ratio of (offline) digital good auction with respect to the benchmark
$f(\bids)=\max(4\bidi[2],3\bidi[3],4\bidi[4],\dots,n\bidi[n])$ is at most $4.12$.\footnote{the actual ratio for a fixed $n$ is $1-\sum\limits_{i=2}^n\left(\frac{-1}{n}\right)^{i-1} \frac{i}{i-1} \binom{n-1}{i-1} +  \frac{3n}{2 (n-2)}\left( \left(1-\frac{2}{n}\right)^{n-1} + 1- \frac{2}{n}\right).$} 
\end{theorem}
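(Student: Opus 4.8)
The plan is to mirror the CGL14 derivation of the optimal ratio against a monotone benchmark, now carried out for $f$ rather than $\FT$. For digital goods one may restrict attention to symmetric, scale-invariant truthful auctions, and CGL14 computes the optimal competitive ratio against any such benchmark through a single optimization, together with an auction that attains it; specialized to $\FT$ this returns the value $\lambda_n$ (whose worst-case input is the equal-revenue distribution, giving $\lambda_n=\Exlong[\vals\sim\dist^n]{\FT(\vals)}/n$). The key structural observation is that $f$ coincides with $\FT$ in every order-statistic coefficient except that of the second-highest bid, where it reads $4\bidi[(2)]$ in place of $2\bidi[(2)]$. I would therefore instantiate exactly the same optimization, changing only this one coefficient, and track how its optimal value moves.

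Concretely, I would first record the CGL14 optimization for benchmarks of the form $\max_{2\le k\le n}c_k\bidi[(k)]$ as a function of the coefficient vector $(c_k)$, noting that for $c_k=k$ it evaluates to $\lambda_n$. Because the benchmark enters only through the objective, raising $c_2$ from $2$ to $4$ perturbs the program in a single localized place — the term controlling the two highest bids, which also shifts the extremal input away from the plain equal-revenue distribution that is worst-case for $\FT$ — while the part generated by the coefficients $c_k=k$ for $k\ge 3$ is untouched. I expect the optimum to split as $\lambda_n$ (inherited from the $k\ge3$ coefficients) plus a correction accounting for the extra weight placed on $\bidi[(2)]$.

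The second step is to evaluate that correction in closed form. This is an order-statistic computation localized to the top two bids; it is what produces the stated expression $\frac{3n}{2(n-2)}\bigl((1-\tfrac2n)^{n-1}+1-\tfrac2n\bigr)$, with the factor $(1-\tfrac2n)^{n-1}$ coming from the event describing the position of the two highest bids relative to the remaining $n-2$ and the rational prefactor from summing the resulting contributions. Letting $n\to\infty$, we have $\lambda_n\to\lambda\approx 2.42$ and the correction tending to $\tfrac32(e^{-2}+1)\approx 1.70$, so the constructed auction's ratio approaches at most $4.12$; this is the worst-case (arbitrarily-large-$n$) value in the statement, and it already improves on the crude $\lambda+2\approx 4.42$ obtained from $f\le\FT+2\bidi[(2)]$ via the decomposition lemma.

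The main obstacle is the interaction of the first two steps: in the $\FT$ computation the coefficients $c_k=k$ combine smoothly across all order statistics, whereas boosting only $c_2$ breaks that regularity precisely at the $k=2$ boundary. The delicate point is to isolate the additional contribution of the two highest bids without disturbing the $\lambda_n$ part inherited from the lower coefficients, since it is exactly this boundary correction — and its limit $\tfrac32(e^{-2}+1)$ — that drives the bound down from $4.42$ to $4.12$.
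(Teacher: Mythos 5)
Your route is the same as the paper's: invoke the CGL14 machinery to identify the optimal ratio against the monotone benchmark $f$ with $\Exlong[\vals\sim\dist^n]{f(\vals)}/n$ computed under the equal-revenue distribution, split that expectation into the $\FT$ part (giving $\lambda_n$) plus a correction caused by the coefficient $4$ on $\bidi[(2)]$, and let $n\to\infty$ to get $\lambda+\frac{3}{2}(e^{-2}+1)\approx 4.12$. However, one claim you make must be deleted because it contradicts the very fact your plan relies on: raising $c_2$ does \emph{not} shift the extremal input away from the equal-revenue distribution. The lower bound via $\dist^n$ is benchmark-independent (every truthful auction has expected revenue $n$ on $\dist^n$), and CGL14 supplies, for \emph{any} monotone benchmark, an auction matching that bound; it is precisely because $\dist^n$ remains worst-case for $f$ that the optimal ratio equals $\Exlong[\vals\sim\dist^n]{f(\vals)}/n$ and your proposed split into $\lambda_n$ plus an equal-revenue correction is meaningful. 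The paper states exactly this reduction as the opening step of its proof. If your parenthetical were true, the formula you go on to compute would not be the competitive ratio at all.

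The second, more substantive gap is that the closed form of the correction --- the entire technical content of the paper's proof --- is asserted rather than derived, and your account of where it comes from is inaccurate. The paper computes $\Prx{f(\bn)\ge z}-\Prx{\FT(\bn)\ge z}=\Prx{\mathcal{H}'_2}-\Prx{\mathcal{H}_2}$, where $\mathcal{H}_2$ (resp.\ $\mathcal{H}'_2$) is the event that $V_2\ge z/2$ (resp.\ $V_2\ge z/4$) while $j\cdot V_j<z$ for all $j\ge 3$; this difference equals $\frac{6n(n-1)}{z^2}\Prx{F_{n-2,2}<z}$, so the correction is not ``localized to the top two bids'': it hinges on the conditioning event $F_{n-2,2}<z$ involving all the remaining bids, whose probability $\bigl(\frac{z-2}{z}\bigr)^{n-2}\frac{z-n}{z-2}$ is obtained from the recursion of Goldberg et al.\ solved in closed form. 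The factor $\bigl(1-\frac{2}{n}\bigr)^{n-1}$ then falls out of integrating this expression over $z\in[n,\infty)$ via a binomial expansion (which occupies half of the paper's proof); it is not the probability of ``the event describing the position of the two highest bids relative to the remaining $n-2$.'' As written, your second step reads the answer off the footnote of the theorem you are proving; to make the argument complete you need this event decomposition, the closed form for $\Prx{F_{n,k}\ge z}$, and the integral evaluation.
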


As $f(\bids)=4\bidi[2]$ for $n=2, 3$, and $4$ bidders we get competitive ratio of $4$, which exactly matches the lower bound. Therefore, our online auction is optimal for the case of $2$, $3$, and $4$ bidders.


\subsection{The benchmark $\maxV$ }

The results of \cite{KoutsoupiasP13} carry over for another standard benchmark, namely, the maximum Vickery $\maxV$. As the exact competitive ratio of the optimal offline auction against $\maxV$ was shown in \cite{CGL14} to be $e-1$ and since $2\maxV(\vals)=\FL(\vals)$ for $n=2$ bidders, the approach of \cite{KoutsoupiasP13} implies the following claim.

\begin{theorem}[Koutsoupias et al.\cite{KoutsoupiasP13}]
The competitive ratio of the online sampling auction of \cite{KoutsoupiasP13} is at most $2(e-1)$ against the $\maxV$ benchmark. The competitive ratio of any online auction against $\maxV$ is at least $2$.
\end{theorem}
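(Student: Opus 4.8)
The plan is to prove the two halves separately. The upper bound of $2(e-1)$ comes from instantiating the black-box online-sampling reduction of \cite{KoutsoupiasP13} with the offline-optimal auction against $\maxV$ supplied by \cite{CGL14}, while the lower bound of $2$ is obtained by pushing the known lower bound of $4$ for the $\FT$ benchmark through the pointwise identity $\FT(\bids)=2\maxV(\bids)$ that holds on two bidders.

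For the upper bound I would reuse the decomposition of the online sampling auction already recorded in this section: run on bidders arriving in a random order, it satisfies $\Ex{\Mon(\bids)}=\sum_{k}\frac{1}{k}\Ex{\Moff[k](\bids^{[k]})}$, and since the first $k$ arrivals form a uniformly random $k$-subset $S$, each term equals $\frac1k\Ex[S:|S|=k]{\Moff[k](\bids_S)}$. Taking $\Moff$ to be the optimal offline auction against $\maxV$, which by \cite{CGL14} is $\beta$-competitive with $\beta=e-1$, yields $\Ex{\Mon(\bids)}\ge \frac{1}{\beta}\sum_{k}\frac{1}{k}\Ex[S:|S|=k]{\maxV(\bids_S)}$. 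The whole content then reduces to the benchmark-specific combinatorial inequality $\sum_{k}\frac{1}{k}\Ex[S:|S|=k]{\maxV(\bids_S)}\ge \tfrac12\,\maxV(\bids)$, which produces the factor-$2$ loss and hence the ratio $2(e-1)$.

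To prove that inequality I would fix the index $j^*$ attaining $\maxV(\bids)=j^*\cdot\bidi[(j^*+1)]$, set $p=\bidi[(j^*+1)]$, and note that the top $j^*+1$ bidders all have value at least $p$. If a random $k$-subset $S$ contains $T_k$ of these high bidders, then the $T_k$-th largest value inside $S$ is at least $p$, so $\maxV(\bids_S)\ge (T_k-1)\,p$; here $T_k$ is hypergeometric with mean $k(j^*+1)/n$. Summing $\tfrac1k\,\Ex{(T_k-1)^+}$ over $k$ reduces the claim to exactly the hypergeometric tail estimate that \cite{KoutsoupiasP13} carries out for $\FT$. The only structural facts used are that $\maxV$ is monotone and positively homogeneous and that it is a maximum of the linear Vickrey terms $j\cdot\bidi[(j+1)]$; all of these hold verbatim, so the $\FT$ argument transfers and the loss is again a factor of $2$.

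For the lower bound the decisive observation is that on two bidders $\FT(\bids)=2\,\bidi[(2)]=2\maxV(\bids)$ (equivalently $2\maxV=\FL$ for $n=2$). Koutsoupias and Pierrakos show that every online auction has $\FT$-competitive ratio at least $4$ already on two-bidder instances, i.e.\ for each online auction $\auction$ there is a two-bidder profile $\bids$ with $\Ex{\auction(\bids)}\le \tfrac14\FT(\bids)$; since $\FT(\bids)=2\maxV(\bids)$ this reads $\Ex{\auction(\bids)}\le\tfrac12\maxV(\bids)$, so no online auction can beat ratio $2$ against $\maxV$. The main obstacle is the combinatorial estimate of the third paragraph: one must check that substituting $\maxV$ for $\FT$ in the prefix-sampling analysis of \cite{KoutsoupiasP13} still gives precisely the constant $2$ and not something larger. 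I expect it to go through because the hypergeometric computation only depends on the ``how many bidders clear a given price'' structure shared by the two benchmarks, but this is the single place where the benchmark enters nontrivially and therefore where care is required.
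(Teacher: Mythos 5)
Your proposal is correct and follows essentially the same route as the paper, which justifies this theorem in a single sentence using exactly your two ingredients: the factor-$2$ online sampling reduction of \cite{KoutsoupiasP13} instantiated with the $(e-1)$-competitive offline auction against $\maxV$ from \cite{CGL14}, and the pointwise identity $\FT(\bids)=2\maxV(\bids)$ on two-bidder instances applied to the two-bidder lower bound of $4$. Your additional sketch of the transfer of the prefix-sampling inequality is sound as well --- the one caveat you correctly flag is real but harmless, since for $\maxV$ the relevant hypergeometric quantity is $\Ex{(T_k-1)^{+}}$ rather than the truncation used for $\FT$, yet the sum $\sum_{k}\frac{1}{k}\Ex{(T_k-1)^{+}}$ again evaluates to exactly $j^{*}/2$ in the tight case $j^{*}=1$, so the factor $2$ survives unchanged.
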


Interestingly, if we run $\Mon$ exactly as proposed in \cite{KoutsoupiasP13}, i.e., as a sequence of $\Moff$ tailored to $\FT$, then $\Mon$ appears to be specifically well suited for the $\maxV$ benchmark. This observation once again highlights how useful is the idea of thinking about the problem with respect to different benchmarks.

\begin{theorem}
Let $\Moff$ be a $\beta$-competitive auction against the $\FT$ benchmark.
The online sampling auctions composed of a sequence of offline auctions $\Moff$ against $\FT$ is $\beta$-competitive against $\maxV$.
\end{theorem}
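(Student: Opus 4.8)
The plan is to run the same induction on the number of bidders that was used in the proof of Theorem~\ref{th:plustwo}, but now carrying the $\maxV$ benchmark through the recursion instead of $\FT$. I would use the black-box decomposition already established there, namely $\Mon[k](\bids^{[k]}) = \Mon[k-1](\bids^{[k]}) + \tfrac{1}{k}A^k(\bids^{[k]})$, where each offline building block is $A^k=\Moff[k]$, a $\beta$-competitive auction against $\FT$, so that $\beta\cdot A^k(\bids^{[k]})\ge\FT(\bids^{[k]})$. The goal is to prove by induction on $k$ that $\beta\cdot\Mon[k](\bids^{[k]})\ge\maxV(\bids^{[k]})$ for every bid vector; since the full online auction is $\Mon[k]$ on $k$ bidders, this is exactly the claimed $\beta$-competitiveness. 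The base case $k=1$ is immediate because $\maxV$ vanishes on a single bid.

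For the inductive step I would average the induction hypothesis over the uniformly random choice of which bidder is absent, using the same identity $\Mon[k-1](\bids^{[k]})=\tfrac{1}{k}\sum_{i=1}^{k}\Mon[k-1](\bidsmi^{[k]})$ exploited in Theorem~\ref{th:plustwo}. This yields $\beta\cdot\Mon[k-1](\bids^{[k]})\ge\tfrac{1}{k}\sum_{i=1}^{k}\maxV(\bidsmi^{[k]})$, and combined with the $\FT$-guarantee on $A^k$ it reduces the whole theorem to the purely combinatorial inequality
\[
\sum_{i=1}^{k}\maxV(\bidsmi^{[k]}) + \FT(\bids^{[k]}) \ \ge\ k\cdot\maxV(\bids^{[k]}).
\]

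To establish this inequality I would sort the bids $\bidi[1]\ge\cdots\ge\bidi[k]$ and let the maximum defining $\maxV(\bids)$ be attained at level $m^*$, so $\maxV(\bids)=m^*\cdot\bidi[m^*+1]$. I would then lower bound each deleted-bidder term $\maxV(\bidsmi^{[k]})$ according to the rank $r$ of the removed bidder. If $r\ge m^*+2$, the $(m^*+1)$-st largest surviving bid is still $\bidi[m^*+1]$, so $\maxV(\bidsmi^{[k]})\ge m^*\cdot\bidi[m^*+1]=\maxV(\bids)$, and there are $k-m^*-1$ such bidders; if $r\le m^*+1$, the $m^*$-th largest surviving bid is at least $\bidi[m^*+1]$, so evaluating at level $m^*-1$ gives $\maxV(\bidsmi^{[k]})\ge(m^*-1)\,\bidi[m^*+1]$, and there are $m^*+1$ such bidders. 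Summing the two regimes gives $\sum_i\maxV(\bidsmi^{[k]})\ge\big(m^*(k-1)-1\big)\bidi[m^*+1]$, leaving a deficit of exactly $(m^*+1)\bidi[m^*+1]$ to reach $k\cdot\maxV(\bids)$. This deficit is absorbed by $\FT$, since $\FT(\bids)\ge(m^*+1)\cdot\bidi[m^*+1]$ by selecting the $(m^*+1)$-st term in the definition of $\FT$ (legal because $2\le m^*+1\le k$).

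The crux of the argument is this combinatorial inequality, and the one point demanding care is choosing the right level at which to evaluate each deleted $\maxV$: the bounds above are tight (they hold with equality on equal-revenue-type profiles), so any looser choice of levels could not be rescued by the single $\FT$ term. I expect the only delicate bookkeeping to be the two rank regimes and the boundary cases $m^*=1$ and $m^*=k-1$, where one regime is empty or the level $m^*-1$ degenerates to $0$; all of these are handled by the trivial bound $\maxV\ge 0$, and the rest follows the template of Theorem~\ref{th:plustwo}.
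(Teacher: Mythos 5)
Your proposal is correct and takes essentially the same route as the paper's own proof: the same induction carrying $\maxV$ through the recursion, the same averaging of the hypothesis over the deleted bidder, the same rank-based split (your regimes $r\le m^*+1$ and $r\ge m^*+2$ are exactly the paper's cases $i\le \ell$ and $i>\ell$ with $\ell=m^*+1$, yielding the identical bound $\bigl(m^*(k-1)-1\bigr)\bidi[m^*+1]$), and the same use of $\FT(\bids)\ge (m^*+1)\cdot\bidi[m^*+1]$ to cover the deficit. If anything, your writeup is more explicit than the paper's (which states the per-level bounds for $\FT$ of the deleted vectors even though the induction supplies $\maxV$ there, and takes a maximum over all levels rather than fixing the maximizer $m^*$), but the argument is the same.
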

\begin{proof}
Similar to the proof of Theorem~\ref{th:plustwo}, we proceed by induction on the number of
bidders. We have
\[
\beta\cdot \Mon[k-1](\bids^{[k]})\ge\frac{1}{k}\sum_{i=1}^{k}\maxV(\bidsmi^{[k]}).
\]
We sort the bids in $\bids^{[k]}:\bidi[1]\ge\dots\ge\bidi[k]$.
For a fixed $\ell$, we want to estimate how the revenue of $\Mon[k-1]$ is compared to
$\bidi[\ell]$. For each $i>\ell$ we have $\FT(\bidsmi^{[k]})\ge(\ell-1)\cdot\bidi[\ell]$; and for $i\le\ell$, we have $\FT(\bidsmi^{[k]})\ge(\ell-2)\cdot\bidi[\ell]$. Hence,
\begin{eqnarray*}
\frac{1}{k}\sum_{i=1}^{k}\FT(\bidsmi^{[k]}) &\ge& \max\InParentheses{\frac{k-2}{k}\bidi[2],
\dots,\frac{(\ell-1)(k-\ell)+(\ell-2)\ell}{k}\bidi[\ell],\dots,\frac{(k-2)k}{k}\bidi[k]}\\
       &=& \max\InParentheses{\frac{k-2}{k}\bidi[2],\dots, \frac{(\ell-1)k-\ell}{k}\bidi[\ell],\dots,\frac{(k-1)k-k}{k}\bidi[k]}\\
       &\ge& \maxV(\bids)-\frac{1}{k}\FT(\bids).
\end{eqnarray*}
Thus, the online sampling auction by running a $\beta$-competitive auction against $\FT$ benchmark is $\beta$-competitive against $\maxV$.
\end{proof}
\begin{corollary}
The competitive ratio of online auctions against $\maxV$ is between $2$ and $2.42$.
\end{corollary}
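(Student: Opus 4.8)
The plan is to obtain the two endpoints of the interval by directly combining the two theorems stated immediately above the corollary, the only external ingredient being the optimal offline ratio against $\FT$ established in \cite{CGL14}.

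First I would dispense with the lower bound: the quoted theorem of Koutsoupias et al.\ already asserts that every online auction against $\maxV$ has competitive ratio at least $2$, so this half of the claim is immediate and requires nothing beyond invoking that statement.

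For the upper bound I would instantiate the theorem just proved, which guarantees that whenever $\Moff$ is a $\beta$-competitive offline auction against $\FT$, the online sampling auction assembled from the sequence $\{\Moff[n]\}$ is $\beta$-competitive against $\maxV$. It then remains only to supply the best available $\beta$. By \cite{CGL14}, the optimal offline digital goods auction against $\FT$ with $n$ bidders has competitive ratio $\lambda_n$, and these ratios increase to the limit $\lambda = \lim_{n\to\infty}\lambda_n = 2.42$. Choosing $\Moff[n]$ to be this optimal auction for each $n$ and setting $\beta = \lambda$ yields an online auction that is $2.42$-competitive against $\maxV$; together with the lower bound this gives the interval $[2, 2.42]$.

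There is no genuine obstacle here, since the corollary is a direct consequence of the two theorems together with the tight offline bound of \cite{CGL14}. The one point worth a sentence of care is the uniformity of $\beta$: the theorem presupposes a competitive constant valid simultaneously for every round $k$, whereas \cite{CGL14} furnishes the $n$-dependent ratios $\lambda_n$. Because the sequence $\lambda_n$ is bounded above by its limit $\lambda$, taking $\beta = \lambda = 2.42$ discharges the hypothesis uniformly, so the induction underlying the theorem carries through unchanged.
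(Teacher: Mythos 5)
Your proposal is correct and matches the paper's (implicit) derivation exactly: the lower bound of $2$ is quoted from the Koutsoupias--Pierrakos theorem, and the upper bound follows by feeding the optimal offline auction of \cite{CGL14}, which is $\lambda_n\le 2.42$-competitive against $\FT$ for every $n$, into the theorem proved just above the corollary. Your remark about using the uniform bound $\beta=\lambda=2.42$ rather than the $n$-dependent ratios $\lambda_n$ is a valid and worthwhile point of care, but it does not change the argument.
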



\bibliographystyle{plain}
\bibliography{bibs,game}

\newpage
\appendix

\section{Proof of Theorem~\ref{them-ratio}}

\begin{proof}
By the same argument as in \cite{CGL14} for $\FT$, the matching lower bound for the optimal competitive ratio
is achieved by the equal revenue distribution with the support $\mathbb{R}^n_{\ge 1}$.  For $n\leq 4$,
$f(\bids)=4\bidi[2]$. In the following, we always assume $n>4$.
We first observe that $f(\bids)=\max(4\bidi[2], \EFL(\bids))$.

We recall that equal revenue distribution $\dist^n$ over the bid vectors is i.i.d. with the density function $\weight(b)=\frac{1}{b^2}$ and cumulative density $1-\frac{1}{b}$ supported on $[1,\infty)$. Let $\bn$ be a random vector drawn from $\dist^n$. 
The key technical problem for us is to compute the expected value of the benchmark $f(\bn)$.
Following~\cite{GoldbergHKS04}, we compute the probability $\prob{f(\bn)\geq z}$ for any given $z$. Since $f(\bn)$ is at least $n$, we may only consider 
$z\ge n$. Let a random variable $V_i$ be the $i$-th largest bid in $\bn$. We also define a set of random variables 
\[F_{n,k}=\max_{i=1,2,\ldots, n} (k+i)\cdot V_i.\]
Let ${\cal H}_i$ denote the event
\[V_i \geq \frac{z}{k+i} \mbox{ \ and \ } \bigwedge_{j=i+1, i+2, \ldots, n} V_j <   \frac{z}{k+j}. \]
The probability of ${\cal H}_i$ can be written as
\[\prob{{\cal H}_i}= \binom{n}{i} \left(\frac{k+i}{z}\right)^i \prob{F_{n-i, k+i}<z}.\]

Since ${\cal H}_i$'s are mutually exclusive and the event $F_{n,k}\geq z$ is the union of  ${\cal H}_i$ for $i=1,2,\ldots, n$, we get
\begin{equation}\label{equ:F-n-k}
\prob{F_{n,k}\geq z} = \sum_i \prob{{\cal H}_i} = \sum_i \binom{n}{i} \left(\frac{k+i}{z}\right)^i \prob{F_{n-i, k+i}<z}.
\end{equation}
This gives a recursive relation for $\prob{F_{n,k}\geq z} $ and the boundary condition is $\prob{F_{0,k}\geq z}=0 $.
This recursion has been solved in~\cite{GoldbergHKS04}:
\[\prob{F_{n,k}\geq z} = 1- \left(\frac{z-k}{z}\right)^n \left(\frac{z-k-n}{z-k}\right). \]

Let ${\cal H}'_2$ denote the event $V_2 \geq \frac{z}{4}$ and  $\bigwedge_{j=3, 4, \ldots, n} V_j <   \frac{z}{j}$.
Then
\[\prob{{\cal H}'_2}=\binom{n}{2} \left(\frac{4}{z}\right)^2 \prob{F_{n-2, 2}<z}.\]
This implies that
\begin{align*}
&\ \ \prob{f(\bn)\geq z} \\
&= \prob{{\cal H}'_2} +\sum_{j=3, 4, \ldots, n} \prob{{\cal H}_j}\\
&= \prob{{\cal H}'_2} +\prob{F_{n,0}\geq z}-\prob{{\cal H}_1}- \prob{{\cal H}_2}\\
&= \binom{n}{2} \left(\frac{4}{z}\right)^2 \prob{F_{n-2, 2}<z} +\frac{n}{z}-\frac{n}{z} \prob{F_{n-1, 1}<z}- \binom{n}{2} \left(\frac{2}{z}\right)^2 \prob{F_{n-2, 2}<z}\\
&=\frac{n}{z}- \frac{n}{z} \left(\frac{z-1}{z}\right)^{n-1} \left(\frac{z-n}{z-1}\right) + \frac{6 n (n-1)}{z^2} \left(\frac{z-2}{z}\right)^{n-2} \left(\frac{z-n}{z-2}\right)
\end{align*}
Therefore, we have
\begin{align*}
\expect{f(\bn)} &= \int_{0}^{\infty} \prob{f(\bn)\geq z} dz  \\
&= n+  \int_{n}^{\infty}\left(\frac{n}{z}- \frac{n}{z} \left(\frac{z-1}{z}\right)^{n-1} \left(\frac{z-n}{z-1}\right) + \frac{6 n (n-1)}{z^2} \left(\frac{z-2}{z}\right)^{n-2} \left(\frac{z-n}{z-2}\right) \right) dz\\
&= n-n\sum_{i=2}^n\left(\frac{-1}{n}\right)^{i-1} \frac{i}{i-1} \binom{n-1}{i-1} +  6 n (n-1) \int_{n}^{\infty}\frac{1}{z^2} \left(\frac{z-2}{z}\right)^{n-2} \left(\frac{z-n}{z-2}\right) dz.\\
\end{align*}

The integration part is
\begin{align*}
&\ \ \int_{n}^{\infty}\frac{1}{z^2} \left(\frac{z-2}{z}\right)^{n-2} \left(\frac{z-n}{z-2}\right) dz\\
&=\int_{n}^{\infty}\frac{1}{z^2} (1-\frac{n}{z}) \left(\frac{z-2}{z}\right)^{n-3} dz\\
&=\int_{n}^{\infty}\frac{1}{z^2} (1-\frac{n}{z}) \left( \sum_{i=0}^{n-3} \binom{n-3}{i} \frac{(-2)^i}{z^i}  \right) dz\\
&=\sum_{i=0}^{n-3} \binom{n-3}{i} (-2)^i  \int_{n}^{\infty} \left(\frac{1}{z^{i+2}}-\frac{n}{z^{i+3}}\right) dz\\
&=\sum_{i=0}^{n-3} \binom{n-3}{i} (-2)^i  \left(\frac{1}{(i+1) n^{i+1}}-\frac{1}{(i+2) n^{i+1}}\right) \\
&=\sum_{i=0}^{n-3} \binom{n-3}{i} (-2)^i  \frac{1}{(i+1)(i+2) n^{i+1}}\\
&=\frac{n}{4 (n-1)(n-2)} \sum_{i=0}^{n-3}  \binom{n-1}{i+2} \left(\frac{-2}{n}\right)^{i+2}  \\
&=\frac{n}{4 (n-1)(n-2)}\left( \left(1-\frac{2}{n}\right)^{n-1} - 1- \binom{n-1}{1}\frac{-2}{n}\right)\\
&=\frac{n}{4 (n-1)(n-2)}\left( \left(1-\frac{2}{n}\right)^{n-1} + 1- \frac{2}{n}\right)
\end{align*}
Therefore, we have
\[ \expect{f(\bn)}=n-n\sum_{i=2}^n\left(\frac{-1}{n}\right)^{i-1} \frac{i}{i-1} \binom{n-1}{i-1} +  \frac{3n^2}{2 (n-2)}\left( \left(1-\frac{2}{n}\right)^{n-1} + 1- \frac{2}{n}\right). \]
And the competitive ratio is
\[1-\sum_{i=2}^n\left(\frac{-1}{n}\right)^{i-1} \frac{i}{i-1} \binom{n-1}{i-1} +  \frac{3n}{2 (n-2)}\left( \left(1-\frac{2}{n}\right)^{n-1} + 1- \frac{2}{n}\right). \]
\end{proof}

\end{document}